\definecolor{hyptxt}{rgb}{0.7, 0.4, 0.9}
\newtheorem{proper}{Conditions}[section]
\newtheorem{defi}{Definition}[section]
\newtheorem{prop}{Proposition}[section]
\newtheorem{theo}{Theorem}[section]
\newcommand{\ket}[1]{|\kern.3ex#1\kern.3ex\rangle}
\newcommand{\bra}[1]{\langle\kern.3ex #1 \kern.3ex|}
\newcommand{\scalar}[2]{\langle\kern.3ex #1 \kern.3ex|\kern.3ex#2\kern.3ex\rangle}
\DeclareMathOperator{\okr}{{\stackrel{{\scriptscriptstyle{\mathsf{def}}}}{=}}}
\DeclareMathOperator{\E}{e}
\def\Da{\mbox{\Large \textit{a}}}
\def\R{\mathbb{R}}
\def\N{\mathbb{N}}
\def\C{\mathbb{C}}
\def\Z {\mathbb{Z}}
\def\lg{\langle }
\def\rg{\rangle }
\def\deq{\stackrel{\mathrm{def}}{=}}
\def\ud{\mathrm{d}}
\def\ii{\mathrm{i}}
\def\S{\mathbb{S}^{1}}
\def\Da{\mbox{\Large \textit{a}}}
\def\De{\mbox{\Large \textit{e}}}
\definecolor{hervecolor}{rgb}{0.8,0,0.7}
\numberwithin{equation}{section}
\date{\today}
\begin{document}
\title{Quantum localisation on the circle}
\author{Rodrigo Fresneda$^{a}$, Jean Pierre Gazeau${}^{b,c}$, and  Diego Noguera${}^c$}
\address{$^{a}$ Federal University of ABC (UFABC)
Santo Andr\'e - SP - Brasil . CEP 09210-580}
\address{${}^b$ APC, UMR 7164,
Univ Paris  Diderot,  Sorbonne Paris-Cit\'e
75205 Paris, France}
\address{${}^c$ Centro Brasileiro de Pesquisas Fisicas
Rua Xavier Sigaud 150, 22290-180 - Rio de Janeiro, RJ, Brazil}
\email{rodrigo.fresneda@ufabc.edu.br, gazeau@apc.in2p3.fr, diegomac@cbpf.br}
\begin{abstract}
Covariant integral quantisation using coherent states for semidirect product groups is studied and applied to the motion of a particle on the circle. In the present case the group is the Euclidean group E$(2)$. We implement the quantisation of the basic classical observables, particularly the $2\pi$-periodic discontinuous angle function and the angular momentum, and compute their corresponding lower symbols. An important part of our study is devoted to the angle operator  given by our procedure, its spectrum and lower symbol, its commutator with the quantum angular momentum, and the resulting Heisenberg inequality.  Comparison with other approaches to the long-standing question of the quantum angle is discussed. 
 \end{abstract}

\maketitle
\tableofcontents
\section{Introduction}
The simple pendulum is a familiar example of an elementary one-dimensional conservative mechanical system.  The model is  
 a point particle of mass $m$ moving on a (portion of) a vertical circle of radius $l$ and subject to the  potential $V(\theta) = - mg\cos\theta$, where $\theta$  is the position angle measured from the lowest position. This problem is pedagogically  interesting, since it represents a link between two extreme situations, namely, the pure rotor (for $g = 0$) and the harmonic oscillator for small $\theta$.
The natural  canonical coordinates are $(p_{\theta}, \theta)$ where $p_{\theta} = m l^2\dot \theta $ is the angular momentum in action units.
 In terms of these variables, the Hamiltonian reads
\begin{equation}
\label{hamsimpend}
H = \frac{p_{\theta}^2}{2ml^2} - mgl \cos\theta\, ,   \quad -\pi\leq \theta < \pi\, , 
\end{equation}
and is conserved: $H=E$.  According to the values of $E$, we distinguish between 2 types of motion, namely 
 rotation at large enough energy  $E> mgl$, libration at small enough energy $E< mgl$, with a bifurcation  separatrix at $E=mgl$. 
  
When we wish to establish the quantum version of this classical model, we face the difficulty of properly defining a localisation operator on the circle, whereas such an object exists unambiguously for  the quantum model of the motion on the line. Indeed, supposing that a $2\pi$ periodic wave functions $\psi(\theta)$ exists on the circle, we cannot introduce an angle operator $\hat \theta$ as the multiplication operator $(\hat \theta\psi)(\theta) = \theta \psi(\theta)$ without breaking  the periodicity, except if the factor $\theta$ stands for the $2\pi$-periodic discontinuous angle function, i.e., 
\begin{equation}
\label{angleop1}
(\hat \theta\psi)(\theta) := \left(\theta -2\pi\,\left\lfloor \frac{\theta}{2\pi}\right\rfloor \right)\psi(\theta)\,,
\end{equation} 
as given for instance in \cite{levyleblond76} (see also \cite{kowal02}), and where $\lfloor\cdot\rfloor$ stands for the floor function.  On a more mathematical level, if we require  $\hat \theta$ to be a self-adjoint multiplication operator with spectrum supported by the period interval $[0,2\pi)$, on which are defined these $2\pi$ periodic wave functions, it is well known that the canonical commutation rule $\left[\hat \theta,  \hat p_{\theta}\right]= \ii \hbar I$ cannot hold with a  self-adjoint quantum angular momentum $\hat p_{\theta}=-\ii\hbar\frac{\partial}{\partial \theta}$. In consequence, it is necessary to revisit the  quantum localisation on the circle and its related Heisenberg inequality $\Delta \hat \theta\, \Delta  \hat p_{\theta} \geq \mathrm{glb}$. Most of the approaches and subsequent discussions rest upon the replacement of a hypothetical angle operator with the quantum version of a smooth periodic function of the classical angle, at the cost of the loss of satisfying localisation properties. 

Perhaps an even simpler physical system where the circular nature of a coordinate manifests itself is the torsion pendulum. Classically, it is an oscillator whose restoration
force is torque. Consider a disc suspended by a wire, free to rotate
along its axis. As the disc rotates by an angle $\theta$ of twist from its equilibrium position in radians, the wire
resists such deformation by developing a restoring torque $\tau=-k\theta$,
where $k$ is the torque constant of the wire. The rotation equation
of motion, neglecting the moment of inertia of the wire, is then the
simple harmonic equation $I\ddot{\theta}=-k\theta$, where $I$ is
the moment of inertia of the disk. The Hamiltonian has the form $H_{tp}=\frac{1}{2I}p^{2}+\frac{k}{2}\theta^{2}$,
and the oscillator frequency is $\mu=\left(k/I\right)^{1/2}$. It
is convenient to write the Hamiltonian in terms of the dimensionless
quantities $\mathfrak{p}=\left(I\mu\hbar\right)^{-\frac{1}{2}}p$, $\mathfrak{q}=\left(I\mu/\hbar\right)^{\frac{1}{2}}\theta$,
$H=\left(\mu\hbar\right)^{-1}H_{tp}$, so that $H=\frac{1}{2}\left(\mathfrak{p}^{2}+\mathfrak{q}^{2}\right)\,.$
Now, although the coordinate $\mathfrak{q}$ is supposedly a periodic angle coordinate, it assumes its value in a subinterval of $(-\pi, \pi)$ for obvious reasons. Here too, a consistent localisation is required when one deals with the (hypothetical) quantum version of this model. 


Let us now present  a survey of the huge literature on this problematic angle operator conjugate to the quantum angular momentum, or on its parent phase operator conjugate to the number operator.  Since Dirac's proposal \cite{dirac27}   of a phase operator in 1927 (see also \cite{dirac58}) and his mention of a canonical commutator between phase and number operators, there has been a wealth of works dedicated to the description of the phase and angle operators. An early review focused on the commutation relations between angle and angular momentum and between phase and number operators is the celebrated \cite{carruthers1968}.  For a review in the context of electromagnetic theory, see \cite{lynch1995}.  In the following we do not attempt to exhaust all the literature, but we only highlight some of its landmarks. In the wake of Dirac's proposal, early works were fixed on the goal of attaining canonical commutation relations  which reproduce the classical Poisson brackets, in analogy to position and momentum. It was soon realized that this was not possible, due to angular momentum operator domain issues \cite{judge1963}. The non-canonical commutation relation
\begin{equation}
\label{comrel1}
\left[\hat{\theta},\hat p_{\theta}\right]=\ii \hbar \left[1-2\pi \sum_n \delta(\theta -2n\pi)\right]\, , 
\end{equation}
a trivial consequence of \eqref{angleop1}, was used to derive modified uncertainty relations \cite{lewis1963}, which were deemed incorrect  for not taking into account the domain of the commutator. Then a new uncertainty relation was proposed  for self-adjoint operators $A$ and $B$ based on a symmetric bilinear form $\Phi_{A,B}(f,g)$ which reproduces the matrix elements of the non-canonical commutator  on eigenfunctions of the angular momentum operator. However, in order to correct the lack of rotation symmetry which follows from the uncertainty relations $\Delta A\, \Delta B \geqslant 1/2 \left|\Phi_{A,B}(f,g) \right|$, yet another uncertainty relation was put forward by \cite{kraus1965}. According to \cite{louisell63}, the phase has no direct physical meaning in terms of measurement, so periodic functions of the phase such as sine and cosine were chosen as the relevant quantities in deriving commutation relations with the number operator. The idea of using periodic functions in place of the angle variable has been pursued in many works where a formal  operator algebra is used to derive uncertainty relations \cite{susskind1964,lerner1970}. The trick is to find  operators $C$ (for cosine) and $S$  (for sine) such that  $C+iS$ is unitary, thus defining a self-adjoint phase operator. This approach was then put on rigorous grounds, where a self-adjoint phase operator was defined which has canonical commutation relations with the number operator in \cite{garrison1970,alimow1979,galindo1984} and general properties of phase operators are found in \cite{ifantis1971}. For other generalizations, one recalls the circular operators in  \cite{mlak1992}, for which phase operators are particular cases, and the enlarged Hilbert space for defining negative integral values of the number operator \cite{newton1980} . A rigorous treatment based on the canonical factorization theorem of the phase operator in the context of quantum electrodynamics is given in \cite{volkin1973}. In \cite{rocca1973} a no-go theorem is proved about the nonexistence of a phase operator along the lines of the previous works for systems with finite degrees of freedom. 

In \cite{levyleblond76} the author proposes that instead of the usual restriction of observables to hermitian operators (or rather, self-adjoint operators), one consider non-hermitian operators (or rather, non-self-adjoint operators) whose eigenstates still provide a resolution of the identity and the necessary probabilistic interpretation of the Hilbert space formalism. In such a manner, one is able to obtain meaningful commutation relations between the angular coordinate and the angular momentum, or between the phase operator and the number operator. For instance, in the case of the angle operator, the author proposes that in place of the discontinuous angle operator, one should consider the corresponding unitary operator $\exp i\varphi$, which has well-defined commutation relations with the angular momentum operator. As a result, one has the familiar Heisenberg inequality proposed by \cite{carruthers1968}.

A work by Royer \cite{royer96} related to these questions also deserves to be mentioned.  A popular approach \cite{barnett1988,popov1992} is one in which the phase and number operators are defined in an $n$-dimensional space, so all computations are performed prior to taking the $n\rightarrow \infty$ limit (for a review, see \cite{barnett2007}).  For works more specifically oriented towards quantum measurement, see \cite{busch01,bukiuwer16}. 

Most of these works address the question of the validity of commutation relations between the phase (resp. angle) operator, defined in a particular way, e.g., by means of a smooth function,   and the number operator (resp. angular momentum) (see also \cite{galapon02} for a clear mathematical analysis).  One of our aims in the present work is to build, from the classical angle function acceptable angle operators through a consistent and manageable quantisation procedure.  We recall that the standard ($\sim$ canonical) quantisation is based on the replacement of the classical conjugate pair $(q,p)\in \R^2$, with $\left\lbrace q , p \right\rbrace = 1$, by its quantum counterpart $(Q,P)$ made of two essentially self-adjoint operators having continuous spectrum $\R$ and such that $[ Q , P ] = i\hbar I$. As a result, the quantisation of a classical observable is the (not well-defined) map  $f(q,p) \mapsto \mathrm{Sym} f(Q,P)$, where the symbol $\mathrm{Sym}$ stands for symmetrisation, which maps real functions to symmetric operators.  Due to the pragmatic stance of the procedure,  canonical quantisation  is commonly accepted in view of its numerous experimental validations since the emergence  of quantum physics.  Now, when one wants to implement the method in dealing with  geometries  other than simple Euclidean spaces, particularly when one is concerned with impenetrable barriers, or when one wants to quantise singular functions, one may be faced with serious mathematical problems. This is precisely  the case we are considering in this article, namely the discontinuous angle (or phase function $\text{arctan}\, p / q$ in the above case), for which canonical quantisation is clearly unsuited.  

In the present work we revisit the problem of the quantum angle through coherent state (CS)  quantisation, which is a particular (and better manageable) method belonging to (covariant) integral quantisation \cite{bergaz13,aagbook14,gabafre14}.    Various families of coherent states have  already been used for this purpose, as the standard or the so-called circle coherent states or even more general versions like the ones in \cite{gazszaf16,argaho12}. CS quantisation has also been applied in the finite-dimensional Hilbertian framework in \cite{pedro2007},  where infinite-dimensional  limits are taken of  mean-values of physical quantities in order to obtain the usual commutation relations between phase and number operators. The essential ingredient of CS quantisation or the more general integral quantisation is the resolution of the identity provided by a (positive) operator-valued measure. Here, our approach is group theoretical, based on unitary irreducible representations of the Euclidean group E$(2)= \R^2 \rtimes$SO$(2)$, and it is strongly influenced by the seminal paper by De Bi\`evre  \cite{debievre89} and chapter 9 of the book \cite{aagbook14}. Related group theoretical approaches are found in \cite{bowo74,isham84,niatchwo98,kastrup06}.

Let us give an overview of covariant integral quantisation of functions (or distributions when allowed by context) defined  on a homogeneous space $X$, viewed as the left coset manifold  $X\sim G/H$,  for the action of a Lie group $G$, where the closed subgroup $H$ is the stabilizer of some point of $X$. The case when $X$ is a symplectic manifold (e.g., a co-adjoint orbit of $G$) is of particular interest since it may be viewed as the phase space for the dynamics of some system. Suppose that $X$ is   equipped with a quasi-invariant measure $\nu$, that is,
${\ud}\nu(g^{-1}x)=\lambda(g,x){\ud}\nu(x)\;\;(\forall g\in G)$, with $\lambda(g,x)$  obeying the cocycle condition
$\lambda(g_1 g_2,x)= \lambda(g_1,x)\,\lambda(g_2, g_1^{-1}x)$.
  For a global Borel section $\sigma:X\rightarrow G$
of the group, let $\nu_{\sigma}$ be the unique quasi-invariant measure
defined by 
\begin{equation} \label{}
{\ud}\nu_{\sigma}(x)=\lambda(\sigma(x),x)\,{\ud}\nu(x)\,,
\end{equation} 

Let $U$ be a UIR which is square integrable $\mathrm{mod}(H,\sigma)$ with 
an \textit{admissible} density operator $\rho$, i.e., $\rho \geq 0$, $\mathrm{Tr} (\rho)=1$, and
\begin{equation}
\label{admissrho}
 c_\rho:=\int_{X}\,\mathrm{tr}\left(\rho\,\rho_{\sigma}(x)\right)\,{\ud}\nu_{\sigma}(x)<\infty  \quad \mbox{with} \quad
\rho_{\sigma}(x):=U(\sigma(x))\rho U(\sigma(x))^{\dag}\,. 
\end{equation}
Then \textit{square-integrability $\mathrm{mod}(H,\sigma)$} entails that we
have the resolution of the identity 
\begin{equation}\label{resident}
I= \frac{1}{c_\rho}\int_{X}\rho_{\sigma}(x)\,{\ud}\nu_{\sigma}(x)\, . 
\end{equation} 
The latter allows us to implement integral quantisation of functions (with possible extension to distributions) on $X$ , which is defined as the linear map
\begin{equation}\label{quantres}
 f\mapsto A^{\sigma}_{f}=\frac{1}{c_\rho}\int_{X}\; f(x)\,\rho_{\sigma}(x)\,{\ud}\nu_{\sigma}(x)\, , 
\end{equation} 
where we have labeled the dependence on section $\sigma$ . 

Covariance holds in the following sense (see Chapter $11$ in \cite{aagbook14} ). 
Consider  the sections $\sigma_{g}: X \rightarrow G, \;\; g \in G$, which are 
covariant translates of $\sigma$ under $g$:
\begin{equation}
 \sigma_{g}(x) = g\sigma (g^{-1}x) = \sigma (x)h(g, g^{-1}x)\, .
\end{equation}
where  the cocycle $h(g,x)$ belongs to $H$.
Given 
 ${\ud}\nu_{\sigma_g}(x) := \lambda (\sigma_{g} (x), x)\,{\ud}\nu(x)$
 define 
  $$\rho_{\sigma_g}(x) = 
U(\sigma_{g}(x))\rho U(\sigma_{g}(x))^{\dag}\, .$$
 For $U$ square integrable  $\mathrm{mod}(H, \sigma )$, 
the general covariance property of the integral quantisation  $f\mapsto A^{\sigma}_f$ reads 
\begin{equation}
\label{quantizcovmodH2}
 U(g) A^{\sigma}_f U(g)^{\dag} = A^{\sigma_g}_{\mathcal{U}_l(g)f}\, , \quad A^{\sigma_g}_f:= \frac{1}{c_\rho}\int_X \rho_{\sigma_g}(x) f(x) \ud \nu_{\sigma_g}(x)\,., 
\end{equation} 
with $\mathcal{U}_l(g)f(x)= f\left(g^{-1}x\right)$.
Similar results are obtained by replacing $\rho$ by  more general bounded operators ${\sf M}$, provided integrability and weak convergence hold in the above expressions. When  $\rho$ is a rank-one density operator, i.e. $\rho = |\eta\rg\lg\eta|$, one says that $\eta \in \mathcal{H}$ is admissible, and we are working with CS quantisation, where the CS's are defined as 
$|\eta_x\rg := |U(\sigma_g(x))\eta\rg$. In this restricted context, $\eta$ is also called fiducial vector.


The organisation of the paper is as follows.  In Section \ref{sec:gralset} we specify the above formalism  to Lie groups $G$ which are semi-direct products  of the type $G=V\rtimes S$, where $V$ is an $n$-dimensional vector space and $S$ is a subgroup of $GL(V)$. We present some important isomorphisms in order to construct the phase space of a physical system  having as a configuration space a certain  coset manifold $G/H$.  We also present  the notion of induced representations of the group $G$ using a representation of the subgroup $H$ of $G$. Using this representation of $G$ we construct a family of CS's  and explain how to perform covariant integral quantisation of functions on $G/H$.
In Section \ref{sec:cohe} we apply the above formalism to one of the simplest cases, namely the Euclidean group $\text{E}(2)$ which is the semidirect product  $E(2)=\R^2\rtimes \mathrm{SO}(2)$ and we introduce coherent states for $\text{E}(2)$ along the lines described above. In Section \ref{sec:quant} the corresponding covariant CS  quantisation is implemented. In this case, the $G$-coset $X=G/H= \left(\R^2\rtimes \mathrm{SO}(2)\right)/\R$  is represented by the cylinder $X = \R\times \mathbb{S}^1= \{(p,q)\,, \, p\in \R\, , \, q\in [0,2\pi)\}$. The configuration manifold is the unit circle on which the motion of the particle takes place, and the velocity is parametrized by $p$.  CS quantisation maps  functions $f(p,q)$  to operators $A_{f}$ (here we drop the $\sigma$ dependence for the sake of simplicity) in the Hilbert space $\mathcal{H}$ carrying the group representation $U$ of $G$ . When $f$ is real, i.e. when it is viewed as a classical observable, we expect that $A_{f}$ be self-adjoint, or at least symmetric.  We study the cases  where the function $f(p,q)=u(q)$ does not depend on $p$ and leads to a multiplication operator, the elementary example $u(q)=e^{\ii nq}$,  the quantum angular momentum issued from $f(p,q)=p$, the kinetic energy $f(p,q)=p^2$, as well as products of the type $pu(q)$ or $p^{2}u(q)$, in order to cover the majority of the interesting Hamiltonians in quantum mechanics.  In Section \ref{sec:lower} a family of probability distributions is constructed from the CS. They provide a semi-classical portrait $\check{f}(p,q)$ associated to the operator $A_{f}$. Explicit formulas are given for $\check{u}(q)$ and $\check p$.  Section \ref{sec:spectral} is devoted to the study of the  angle operator  resulting from the quantisation of the $2\pi$-periodic discontinuous angle function, particularly its  spectrum as a bounded self-adjoint multiplication operator.  The study is illustrated analytically and numerically with the use of a  particular family of smooth fiducial vectors $\eta$. Section \ref{heisenberg} is devoted to the analytic and numerical study of the commutation relation  between  quantum angle and quantum angular momentum and the resulting uncertainty relation or Heisenberg inequality. In Section \ref{fourier} together with Appendix \ref{CirCSqangle},  we discuss the link between  the coherent states on which our work is based and the coherent states built from probabilistic requirements given in  \cite{argaho12}. The latter are a generalisation of coherent states for the circle proposed by various authors \cite{main:ch5:debgo,main:ch5:kopap,main:ch5:delgo,main:ch5:kowrem1,main:ch5:hallmitch}, as well as of subsequent developments \cite{main:ch5:kowrem2,trifonov03,main:ch5:kowrem3}. In the conclusion (Section \ref{conclu}), we give some hints on upcoming research. 
\section{The general setting}\label{sec:gralset}
This material is essentially borrowed from \cite{aagbook14,debievre89}.
\subsection{Semidirect product groups}
\label{sub:semi}
Let us consider an $n$-dimensional vector space $V$, a subgroup $S$ of $GL(V)$ and the group $G=V\rtimes S$ with:
\begin{itemize}
\item the action $v\mapsto sv$ of $S$ on $V$, for $v\in V$ and $s\in S$,

\item the semidirect product law of
composition $(x_{1},s_{1})(x_{2},s_{2})=(x_{1}+s_{1}x_{2},s_{1}s_{2})$ for
$x_{1},x_{2}\in V$ and $s_{1},s_{2}\in S$,

\item the action $V^{*}\ni k\mapsto sk$ of $S$ on the dual $V^{*}\sim V$,  defined
by  $\lg sk;x\rg =\lg k;s^{-1}x\rg$ (dual pairing between  $V^{*}$ and $V$),

\item the \textit{adjoint action} of $G$ on its Lie algebra $\mathfrak{g}$: $\text{Ad}_{g}(X)=gXg^{-1}$ for $g\in G$ and $X\in\mathfrak{g}$,

\item the \textit{coadjoint action} of $G$ on $\mathfrak{g}^{*}$: $\lg \text{Ad}_{g}^{\#}(X^{*});X\rg_{\mathfrak{g}^{*},\mathfrak{g}} =\lg X^{*};\text{Ad}_{g^{-1}}(X)\rg_{\mathfrak{g}^{*},\mathfrak{g}}$ for $X^{*}$ (dual pairing between  $\mathfrak{g}^{*}$ and $\mathfrak{g}$).
\end{itemize}
We now present some useful isomorphisms (summarized in the equation \ref{eq:isomorph}). Given the orbit of $k_{0}\in V^{*}$ under the action of $S$
\begin{equation}
\mathcal{O}^{*}=\lbrace k=sk_{0}\in V^{*}\,\vert\,s\in S\rbrace\, ,
\end{equation}
 the \textit{cotangent bundle} $T^{*}\mathcal{O}^{*}:=\bigcup_{k\in\mathcal{O}^{*}}T^{*}_{k}\mathcal{O}^{*}$ admits a symplectic structure. Given the Lie algebras $\mathfrak{v}$ of $V$ and  $\mathfrak{s}$ of $S$ respectively, the (coadjoint) orbit $\mathcal{O}^\ast_{(k_{0},0)}=\lbrace \text{Ad}_{g}^{\#}(k_{0},0)\in\mathfrak{g}^{*}\,\vert\, g\in G\rbrace$ of $(k_{0},0)\in\mathfrak{g}^{*}$ (for $k_{0}\in\mathfrak{v}^{*}$ and $0\in\mathfrak{s}^{*}$)  is isomorphic to $T^{*}\mathcal{O}^{*}$ under the coadjoint action \cite{aagbook14} . The stabilizer $H_{0}=N_{0}\rtimes S_{0}$ of $(k_{0},0)\in\mathfrak{g}^{*}$ under the coadjoint action is the semi-direct product between the \textit{annihilator}
\begin{equation}
N_{0}=\left\lbrace x\in V:\lg p;x\rg=0, \forall p\in T^{*}_{k_{0}}\mathcal{O}^{*}
\right\rbrace\, ,
\end{equation}
and the stabilizer $S_{0}=\left\lbrace s\in S \vert sk_{0}=k_{0}\right\rbrace $ of $k_{0}\in V^{*}$ under the action of $S$. The left coset space $X=G/H_{0}$ is isomorphic to $T^{*}\mathcal{O}^{*}$. Considering the space $V_{0}=T^{*}_{k_{0}}\mathcal{O}^{*}$, the space $T^{*}\mathcal{O}^{*}$ is isomorphic to $V_{0}\times\mathcal{O}^{*}$ as a Borel space.  We can summarize the isomorphisms given above in the following way \footnote{A detailed proof of this relation can be found in \cite{aagbook14}, Chapter 9, Section 9.2.2}
\begin{equation}
\label{eq:isomorph}
\mathcal{O}_{(k_{0},0)}\simeq T^{*}\mathcal{O}^{*}\simeq X=  G/H_{0} \simeq V_{0}\times\mathcal{O}^{*}\, . 
\end{equation}
\subsection{Induced representations for semi-direct product groups}
Let us consider a one-dimensional unitary representation of $V$ given by the character $\chi(v)=\exp(-\ii\langle k_{0};v\rangle)$ (for $k_{0}\in V^{*}$ and $v\in V$), and a unitary irreducible representation $s\mapsto L(s)$ of $S_{0}$ (carried by the Hilbert space $\mathcal{K}$). Then one defines a unitary irreducible representation of $V\rtimes S_{0}$ as
\begin{equation}
\label{eq:rep1}
\left(\chi \otimes L\right)(v,s)=e^{-\ii\langle k_{0};v\rangle}L(s)\quad\text{carried by}\,\,\mathcal{K}\,.
\end{equation}
Given the relation $G/(V\rtimes S_{0})\simeq\mathcal{O}^{*}$, one \textit{induces} a representation of $G$ from the representation $\chi \otimes L$ of $V\rtimes S_{0}$.  Let us consider the bundle $S\xrightarrow{\pi_{S}}\mathcal{O}^{*}$ with the projection $S\ni s\mapsto \pi_{S}(s)\in\mathcal{O}^{*}$, and the smooth section $\Lambda:\mathcal{O}^{*}\rightarrow S$ such that
\begin{subequations}
\label{eq:lambdasection}
\begin{align}
\Lambda(k_{0})&=e\;(\text{identity element of}\;S)\\
\Lambda(k)k_{0}&=k,\;k\in\mathcal{O}^{*}\, .
\end{align}
\end{subequations}
In this way any element $s\in S$ can be written as
\begin{equation}
s=\Lambda(k)s_{0}\quad\text{for}\,\,k\in\mathcal{O}^{*},\,\,s_{0}\in S_{0}\, .
\end{equation}
Then one defines the action of $S$ on $\mathcal{O}^{*}$ as $s^{\prime}\pi_{S}(s)=\pi_{S}(s^{\prime}s)$ for $s^{\prime},s\in S$. Considering this action and the property 
$\pi_{S}\left[\Lambda(sk)\right]=sk$ one gets
\begin{equation}
\label{eq:eqproj}
\pi_{S}\left[\Lambda(sk)\right]=\pi_{S}\left[s\Lambda(k)\right]\, .
\end{equation}
Let us consider the bundle $G\xrightarrow{\pi_{G}}\mathcal{O}^{*}$ with the projection $\pi_{G}(x,s)=\pi_{S}(s)$. A smooth section $\lambda:\mathcal{O}^{*}\rightarrow G$ is defined by
\begin{equation}
\label{eq:tinylambda}
\lambda(k)=(0,\Lambda(k))\, .
\end{equation}
According to the definition (\ref{eq:eqproj}) and the equation (\ref{eq:tinylambda}), one finds the relation $\pi_{G}\left[(0,\Lambda(sk))\right]=\pi_{G}\left[(v,s\Lambda(k))\right]$. In other words $(0,\Lambda(sk))$ and $(v,s\Lambda(k))$ belong to the same fiber (equivalence class). Therefore, 
\begin{equation}
(0,\Lambda(sk))h((v,s),k)=
(v,s\Lambda(k))\, .
\end{equation}
The element $h((v,s),k)\in G$ defines the cocycles\footnote{For $g,g_{1},g_{2}\in G$ and $k\in\mathcal{O}^{*}$ with $h^{\prime}(g,k)=[h(g^{-1},k)]^{-1}\in V\rtimes S_{0}$, the cocycle conditions are: \[\left\lbrace\begin{array}{rl}
h^{\prime}(g_{1}g_{2},k)&=h^{\prime}(g_{1},k)h^{\prime}(g_{2},g_{1}^{-1}k),\\
h^{\prime}(e,k)&=e.
\end{array}\right.\]} $h:G\times\mathcal{O}^{*}\rightarrow V\rtimes S_{0}$ and  $h_{0}:S\times\mathcal{O}^{*}\rightarrow S_{0}$ by
\begin{subequations}
\label{eq:cocycles}
\begin{align}
h((v,s),k)=&
(\Lambda(sk)^{-1}v,h_{0}(s,k))\, ,
\\
h_{0}(s,k)=&
\Lambda(sk)^{-1}s\Lambda(k).
\end{align}
\end{subequations}
Using the representation $\chi\otimes L$ of $V\rtimes S_{0}$, one represents $h((v,s)^{-1},k)\in V\rtimes S_{0}$ as
\begin{equation}
\label{eq:repind1}
(\chi \otimes L)(h((v,s)^{-1},k))=e^{-\ii\langle k_{0};v\rangle}L\left(h_{0}(s^{-1},k)\right)\, .
\end{equation}
Considering the space $\widetilde{\mathcal{H}}=\mathcal{K}\otimes L^2(\mathcal{O}^{*},d\nu)$  of all square-integrable functions $\phi:\mathcal{O}^{*}\rightarrow\mathcal{K}$  in the norm $\|\phi \|^{2}_{\widetilde{\mathcal{H}}}=\int_{\mathcal{O}^{*}}\|\phi(k) \|^{2}_{\mathcal{K}}d\nu(k)$, one defines the representation $(v,s)\mapsto \tensor[^\chi^L]{U}{}(v,s)$ of $G$ (carried by $\widetilde{\mathcal{H}}$) as
\begin{equation}
\label{eq:inducedrep}
\left( \tensor[^\chi^L]{U}{}(v,s)\phi\right)(k)=e^{\ii\langle k_{0};v\rangle}L\left(h_{0}(s^{-1},k)\right)^{-1}\phi(s^{-1}k)\, .
\end{equation}
The expression (\ref{eq:inducedrep}) is a representation of $G$, which  is \textit{induced} by the representation $\chi\otimes L$ of $V\rtimes S_{0}$. The representation $\tensor[^\chi^L]{U}{}$ of $G$ is irreducible.

\subsection{Coherent states for semi-direct product groups}
From the isomorphisms (\ref{eq:isomorph}) one constructs a section $V_{0}\times\mathcal{O}^{*}\ni(\bm{ p},\bm{ q})\mapsto\sigma(\bm{ p},\bm{ q})\in G$, where $(\bm{ p},\bm{ q})$ are canonically conjugate pairs for the symplectic structure of the manifold $V_{0}\times\mathcal{O}^{*}$. Given the invariant symplectic  measure $\ud\mu(\bm{ p},\bm{ q})$ for $V_{0}\times\mathcal{O}^{*}$, the action of the induced representation (\ref{eq:inducedrep}) of $G$ on a vector $\eta\in\widetilde{\mathcal{H}}$ gives a family of vectors $\eta^{\sigma}_{\bm{ p},\bm{ q}}\equiv \eta_{\bm{ p},\bm{ q}}\in\widetilde{\mathcal{H}}$ parametrized by $(\bm{ p},\bm{ q})$ (where $\bm{q}\in V_{0}$ and $\bm{p}\in\mathcal{O}^{*}$)
\begin{equation}
\label{eq:generalcoherent}
\eta_{\bm{ p},\bm{ q}}(k)=\left( \tensor[^\chi^L]{U}{}(\sigma(\bm{ p},\bm{ q}))\eta\right)(k)
\quad
\Leftrightarrow
\quad
\ket{\eta_{\bm{ p},\bm{ q}}}=\tensor[^\chi^L]{U}{}(\sigma(\bm{ p},\bm{ q}))\ket{\eta}\, .
\end{equation}
Since in the present paper the representation $L$ is actually trivial,  we  dismiss $\mathcal{K}$ from now on, so that $\widetilde{\mathcal{H}} = \mathcal{H}=L^2(\mathcal{O}^{*},d\nu)$. Let us consider the formal integral
\begin{equation}
\label{eq:formalintegral}
\int_{V_{0}\times\mathcal{O}^{*}}\ud\mu(\bm{ p},\bm{ q})\scalar{\phi}{\eta_{\bm{ p},\bm{ q}}}_{\mathcal{H}}\scalar{\eta_{\bm{ p},\bm{ q}}}{\psi}_{\mathcal{H}}\,, \,  \mathrm{where} \,\,\phi,\psi:\mathcal{O^*}\rightarrow \C \,.
\end{equation}
If we prove that it is equal to $c_{\eta}\scalar{\phi}{\psi}$ for some constant $0<c_{\eta}<\infty$, we  obtain that   the resolution of the identity
\begin{equation}
\label{eq:resolutionof}
\dfrac{1}{c_{\eta}}\int_{V_{0}\times\mathcal{O}^{*}}\ud\mu(\bm{ p},\bm{ q})\ket{\eta_{\bm{ p},\bm{ q}}}\bra{\eta_{\bm{ p},\bm{ q}}}=I\quad\text{where}
\,\,0<c_{\eta}<\infty
\end{equation}
holds on $\mathcal{H}$. In the case we are considering in this paper, we will see that  \eqref{eq:resolutionof} holds by imposing  restrictions on $\text{supp}\,\eta$. When (\ref{eq:resolutionof}) is valid, the states (\ref{eq:generalcoherent}) are our  (covariant) \textit{coherent states}, which generalize  the Gilmore-Perelomov construction \cite{gilmore72,perelomov86}.
\subsection{Covariant integral quantisation}

In the present framework, CS quantisation maps the classical observable $f(\bm{ p},\bm{ q})$ in  phase space to the operator $A_{f}$ acting on the Hilbert space $\mathcal{H}$  by means of the formula
\begin{equation}
\label{eq:integralquant}
A^{\sigma}_{f} \equiv A_{f}=
\dfrac{1}{c_{\eta}}
\int_{V_{0}\times\mathcal{O}^{*}}\ud\mu(\bm{ p},\bm{ q})
\left\vert\eta_{\bm{ p},\bm{ q}}\right\rangle
\left\langle\eta_{\bm{ p},\bm{ q}}\right\vert f(\bm{ p},\bm{ q})\, .
\end{equation}
If $f$ is real, the operator $A_{f}$ is symmetric by construction, and if $f$ is real semi-bounded, then there exists a canonical  self-adjoint extension of $A_f$, called Friedrichs extension,   based on quadratic forms. 

Covariance holds in the sense of \eqref{quantizcovmodH2}:
\begin{equation}
\label{quantizcovqp}
 \tensor[^{\chi L}]{U}{}(g) A_f \tensor[^{\chi L}]{U}{}(g)^{\dagger} = A^{\sigma_g}_{\mathcal{U}_l(g)f}\, , \quad A^{\sigma_g}_f:= \dfrac{1}{c_{\eta}}
\int_{V_{0}\times\mathcal{O}^{*}}\ud\mu_{\sigma_g}(\bm{ p},\bm{ q})
\left\vert\eta^{\sigma_g}_{\bm{ p},\bm{ q}}\right\rangle
\left\langle\eta^{\sigma_g}_{\bm{ p},\bm{ q}}\right\vert f(\bm{ p},\bm{ q})\,, 
\end{equation} 
with  $\ket{\eta^{\sigma_g}_{\bm{ p},\bm{ q}}}=\tensor[^\chi^L]{U}{}(g\sigma(g^{-1}(\bm{ p},\bm{ q})))\ket{\eta}$ and  $\mathcal{U}_l(g)f(\bm{ p},\bm{ q})= f\left(g^{-1}(\bm{ p},\bm{ q})\right)$.

In the context of  CS quantisation,  one defines the semiclassical portrait of the operator $A_{f}$, or its lower \cite{lieb73} or covariant \cite{berezin74} symbol, as 
\begin{equation}
	\label{lower1111}
	\check{f}(\bm{ p},\bm{ q})=Tr\left(\ket{\eta_{\bm{ p},\bm{ q}}}\bra{\eta_{\bm{ p},\bm{ q}}}A_{f}\right)=
	\dfrac{1}{c_{\eta}}
	\int_{V_{0}\times\mathcal{O}^{*}}\ud\mu(\bm{ p}^{\prime},\bm{ q}^{\prime})f\left(\bm{ p}^{\prime},\bm{ q}^{\prime}\right)
	\left|\left\langle \eta_{\bm{ p}^{\prime},\bm{ q}^{\prime}}|\eta_{\bm{ p},\bm{ q}}\right\rangle \right|^{2}\, .
\end{equation}
It can be viewed as  the average of the function $f(\bm{ p},\bm{ q})$ with respect to the probability distribution $(\bm{ p}^{\prime},\bm{ q}^{\prime}) \mapsto \left|\left\langle \eta_{\bm{ p}^{\prime},\bm{ q}^{\prime}}|\eta_{\bm{ p},\bm{ q}}\right\rangle \right|^{2}/c_{\eta}$ with respect to the measure $\mu$.
Given a family of scale parameters $\epsilon_i$, e.g. Planck constant, characteristic length, ...,  and a distance function $d(f,\check{f})$, a classical
limit of $A_f$, if it exists, can be viewed as
\begin{equation}
d(f,\check{f})\rightarrow 0\quad \text{as}\quad \epsilon_i\rightarrow 0\, ,
\end{equation}
while the set of $\epsilon_i$'s are subject to possible constraints, like fixed ratios. 
\section{Coherent states for $\text{E}(2)$}\label{sec:cohe}
Following the method exposed in the previous section, one now considers the Euclidean group $G=\text{E}(2)$, where $V=\R^{2}$ and $S=\text{SO}(2)$. The action of $\text{SO}(2)$ on $\R^{2}$ is
\begin{equation}\label{eq:actionofso2}
\bm{x}\mapsto \mathcal{R}(q)\bm{x},\quad \bm{x}\in\R^{2},\,\mathcal{R}(q)\in\text{SO}(2),\,q\in[0,2\pi)\,,
\end{equation}
where $\mathcal{R}(q)$ corresponds to a rotation by the angle $q$. An element of $\text{E}(2)$ can be represented by the pair $(\bm{x},q)$, for which the composition law  reads
\begin{equation}
\label{x0000}
(\bm{x},q)(\bm{x}^{\prime},q^{\prime})=(\bm{x}+\mathcal{R}(q)\bm{x}^{\prime},
q+q^{\prime})\, .
\end{equation}
Since $V^{*}=\R^{2}$, for $\bm{k}\in V^{*}$ and $\bm{x}\in V$ the dual pairing is $\langle\bm{k};\bm{x}\rangle=\bm{k}\cdot\bm{x}=k_{1}x_{1}+k_{2}x_{2}$. The orbit of $\bm{k}_{0}\in\R^{2}$ under the action of $\text{SO}(2)$ is $\mathcal{O}^{*}=\lbrace\bm{k}=\mathcal{R}(q)\bm{k}_{0}\in\R^{2}\, |\,\mathcal{R}(q)\in\text{SO}(2)\rbrace$, i.e., it is a circle. In particular, if $\lVert\bm{k}_{0}\rVert=1$, we have $\mathcal{O}^{*}=\mathbb{S}^{1}$. The induced representation $\tensor[^\chi^L]{U}{} \equiv U$ of $\text{E}(2)$ is obtained by direct application of Equation (\ref{eq:inducedrep}). Since $h_{0}:\,S\times\mathcal{O}^{*}\rightarrow S_{0}$ and the stabilizer $S_{0}$ of $k_0$ is the identity, for $(\bm{r},\theta)\in\text{E}(2)$ we have
\begin{equation}
\label{eq:repforso2}
\left( U(\bm{r},\theta)\phi\right)(\alpha)=e^{\ii(r_{1}\cos\theta+r_{2}\sin\theta)}\phi(\alpha-\theta)\, ,
\end{equation}
where $\phi(\alpha)\in L^{2}(\mathbb{S}^{1},\ud\alpha)$. \\
The stabilizer under the \textit{coadjoint action} $\text{Ad}_{\text{E}(2)}^{\#}$ is
\begin{equation}
\label{eq:stabilazerH0}
H_{0}=
\left\lbrace (\bm{x},0)\in\text{E}(2)\,\vert\,\bm{c}\cdot\bm{x}=0,\,\bm{c}\in\R^{2}\,\text{fixed} \right\rbrace\, , 
\end{equation}
and $V_0 \sim \R$.

The \textit{cotangent bundle} represents the classical phase space for a particle moving on a circle, where $q$ is the angular position. Due to the isomorphisms (\ref{eq:isomorph}), the cotangent bundle becomes
\begin{equation}
\label{eq:newisomorph}
X \equiv T^{*}\mathbb{S}^{1}\simeq(\R^{2}\rtimes\text{SO}(2))/H_{0}\simeq\R\times\mathbb{S}^{1}\, ,
\end{equation}
that is, it carries coordinates $(p,q)\in \mathbb{R}\times \mathbb{S}^{1}$ and  symplectic invariant measure $\ud p\, \ud q\equiv \mathrm{d}p\wedge \mathrm{d}q$.
\begin{theo}
There exists a section $\sigma:\R\times\mathbb{S}^{1}\rightarrow\text{E}(2)$ defined as
\begin{equation}
\label{x0021}
\sigma(p,q)=(\mathcal{R}( q)(\bm{\kappa}p+\bm{\lambda}),q)\, ,
\end{equation}
where $\bm{\kappa},\bm{\lambda}\in\R^{2}$ are constant vectors.
\end{theo}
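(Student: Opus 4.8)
The plan is to obtain $\sigma$ by specialising the two–stage construction of Section~\ref{sec:gralset}: first take the canonical lift $\lambda$ of the base orbit $\mathcal{O}^{*}\simeq\mathbb{S}^{1}$ into $\text{E}(2)$ given by \eqref{eq:tinylambda}, and then compose it on the right with a translation in the complementary direction $V_{0}\simeq\R$ parametrised by $p$. Concretely, normalising $\lVert\bm{k}_{0}\rVert=1$ so that $\mathcal{O}^{*}=\mathbb{S}^{1}=\{\mathcal{R}(q)\bm{k}_{0}\}$, the choice $\Lambda(\mathcal{R}(q)\bm{k}_{0})=\mathcal{R}(q)$ satisfies \eqref{eq:lambdasection}, so $\lambda(q)=(0,\mathcal{R}(q))$ is a smooth global section of $\text{E}(2)\xrightarrow{\pi_{G}}\mathcal{O}^{*}$; one then sets $\sigma(p,q):=\lambda(q)\,(\bm{\kappa}p+\bm{\lambda},e)$ and uses the composition law \eqref{x0000} to rewrite it in the product form $\bigl(\mathcal{R}(q)(\bm{\kappa}p+\bm{\lambda}),q\bigr)$ announced in the statement. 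Smoothness is then immediate, and since only $\mathcal{R}(q)$ --- a $2\pi$-periodic function of $q$ --- enters, $\sigma$ is well defined on $\R\times\mathbb{S}^{1}$.

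The substance of the argument is to check that $\sigma$ is a section of the quotient map $\pi_{X}:\text{E}(2)\to X=\text{E}(2)/H_{0}$, i.e.\ that $\pi_{X}\circ\sigma$ is the identity under the identification $X\simeq\R\times\mathbb{S}^{1}$ of \eqref{eq:newisomorph}. First I would make $\pi_{X}$ explicit: from \eqref{x0000} one gets $(\bm{r},\theta)^{-1}(\bm{r}',\theta')=\bigl(\mathcal{R}(-\theta)(\bm{r}'-\bm{r}),\theta'-\theta\bigr)$, so, with $H_{0}$ as in \eqref{eq:stabilazerH0} (here $N_{0}=\ker(\bm{c}\,\cdot)$ is a line and $S_{0}=\{e\}$), two elements lie in the same left coset iff $\theta=\theta'$ and $(\mathcal{R}(\theta)\bm{c})\cdot\bm{r}=(\mathcal{R}(\theta)\bm{c})\cdot\bm{r}'$; thus the coset of $(\bm{r},\theta)$ is faithfully labelled by $\bigl(\theta,(\mathcal{R}(\theta)\bm{c})\cdot\bm{r}\bigr)\in\mathbb{S}^{1}\times\R$, which is exactly the content of $X\simeq\R\times\mathbb{S}^{1}$. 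Applying this to $\sigma(p,q)$: the angular label is $q$, and, using orthogonality of $\mathcal{R}(q)$, the second label is $(\mathcal{R}(q)\bm{c})\cdot\mathcal{R}(q)(\bm{\kappa}p+\bm{\lambda})=\bm{c}\cdot(\bm{\kappa}p+\bm{\lambda})=(\bm{c}\cdot\bm{\kappa})\,p+\bm{c}\cdot\bm{\lambda}$. Hence $\pi_{X}\circ\sigma$ is the map $(p,q)\mapsto\bigl(q,(\bm{c}\cdot\bm{\kappa})p+\bm{c}\cdot\bm{\lambda}\bigr)$, a bijection of $\R\times\mathbb{S}^{1}$ precisely when $\bm{c}\cdot\bm{\kappa}\neq0$; choosing any $\bm{\kappa}$ with $\bm{c}\cdot\bm{\kappa}\neq0$ and absorbing $\bm{c}\cdot\bm{\kappa}$ and $\bm{c}\cdot\bm{\lambda}$ into an affine reparametrisation of $p$ turns $\sigma$ into a genuine section, which proves existence. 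Any $\bm{\lambda}$ is allowed at this stage, the residual freedom in $\bm{\kappa}$ and $\bm{\lambda}$ being fixed later by requiring that $p$ be the physical angular momentum.

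The point that needs care is the bookkeeping behind the chain of isomorphisms \eqref{eq:isomorph}: one must pin down which one-dimensional subspace of $V=\R^{2}$ the momentum $p$ lives in and how it embeds, so as to recognise that the condition on $\bm{\kappa}$ is exactly transversality to the annihilator $N_{0}$ --- equivalently, $\bm{\kappa}$ must project nontrivially onto $V_{0}=T^{*}_{\bm{k}_{0}}\mathcal{O}^{*}$, so that the translation $\bm{\kappa}p$ sweeps out the whole fibre $V/N_{0}$ rather than being (partly) swallowed by $H_{0}$. By contrast, the a priori topological worry --- whether a \emph{global} section over $\mathbb{S}^{1}$ exists --- is a non-issue here, since the relevant bundle $X\to\mathbb{S}^{1}$ has contractible ($\R$) fibre and, concretely, $\sigma$ depends on $q$ only through the $2\pi$-periodic matrix $\mathcal{R}(q)$, so no monodromy arises.
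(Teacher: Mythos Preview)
Your argument is correct, but it follows a genuinely different route from the paper's. You prove existence by direct construction and verification: write down $\sigma(p,q)=\lambda(q)\,(\bm{\kappa}p+\bm{\lambda},e)$, compute $\pi_{X}\circ\sigma$ explicitly using the coset description of $X$, and observe that the result is an affine bijection of the $p$-line provided $\bm{c}\cdot\bm{\kappa}\neq 0$. The paper instead \emph{derives} the form of $\sigma$: it starts from an undetermined $\sigma(p,q)=(\bm{f}(p,q),q)$, imposes the covariance relation $(\bm{r},\theta)\sigma(p,q)=\sigma(p',q')(\bm{x},0)$ with $(\bm{x},0)\in H_{0}$, and uses the resulting conditions \eqref{x0010} together with invariance of the symplectic measure $\ud p\wedge\ud q$ to whittle $\bm{f}$ down to $\mathcal{R}(q)(\bm{\kappa}p+\bm{\lambda})$. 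Your route is shorter and cleaner for the bare existence statement, and it makes the transversality condition $\bm{c}\cdot\bm{\kappa}\neq 0$ explicit; the paper's route, on the other hand, explains \emph{why} this particular affine form is the natural one compatible with the $G$-action on phase space and with the invariant measure --- information that is used implicitly later when $p$ is interpreted as the angular momentum. The two arguments are complementary rather than competing.
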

\begin{proof}
Let us consider a section $\sigma:\R\times\mathbb{S}^{1}\rightarrow\text{E}(2)$ given by
\begin{equation}
\label{eq:gralsection}
\sigma(p,q)=(\bm{f}(p,q),q)\, ,
\end{equation}
where $\bm{f}(p,q)$ is a function to be determined. An immediate factorization of $(\bm{r},\theta)\in G$
using the elements $(\bm{x},0)\in H_{0}$ is given by
\begin{equation}
\label{x0002}
(\bm{r},\theta)=(\bm{r}^{\prime},\theta)(\bm{x},0),\quad\text{for some }\bm{r}^{\prime}\in \R^{2}\, .
\end{equation}
With this expression at hand,  one writes \footnote{For $g\in G$, $\gamma\in X $ and $h\in H_{0}$, the isomorphisms (\ref{eq:isomorph}) validate the relation $g\sigma(\gamma)=\sigma(g\gamma)h$, where $g\gamma$ is the action of $G$ on $X$. We make the following correspondences: $\sigma(p,q)=\sigma(\gamma)$ and $\sigma(p^{\prime},q^{\prime})=\sigma(g\gamma)$}
\begin{equation}
\label{x0007}
(\bm{r},\theta)\sigma(p,q)=\sigma(p^{\prime},q^{\prime})(\bm{x},0)\, . 
\end{equation}
Taking into account  section (\ref{eq:gralsection}), the decomposition (\ref{x0007}) becomes
\begin{equation}
\label{x0009}
(\bm{r}+\mathcal{R}( \theta)\bm{f}(p,q),\theta+q)=
(\bm{f}(p^{\prime},q^{\prime})+\mathcal{R}( q^{\prime})\bm{x},q^{\prime})\, . 
\end{equation}
Therefore, we arrive at the conditions
\begin{subequations}
\label{x0010}
\begin{align}
\bm{r}+\mathcal{R}( \theta)\bm{f}(p,q)&=\bm{f}(p^{\prime},q^{\prime})+\mathcal{R}( q^{\prime})\bm{x}\, ,\\
q^{\prime}&=\theta+q\, .
\end{align}
\end{subequations}
These conditions determine  the change of variables $(p,q)\rightarrow(p^{\prime},q^{\prime})$  for the angular coordinate $q$ and its conjugate momentum $p$. The condition (\ref{x0010}) provides a definition for the function $q^{\prime}(q,\theta)$, but not an explicit expression for the function $p^{\prime}(p,q,\theta,\bm{r})$. From (\ref{x0010}) the vector $\bm{x}$ is written as
\begin{equation}
\label{x0011}
\bm{x}=
\mathcal{R}( -q-\theta)\bm{r}+\mathcal{R}( -q)\bm{f}(p,q)-
\mathcal{R}( -q-\theta)\bm{f}(p^{\prime}(p,q,\theta,\bm{r}),q^{\prime}(q,\theta))\, .
\end{equation}
Since $(\bm{x},0)\in H_{0}$, we have from \eqref{eq:stabilazerH0} $\bm{c}\cdot\bm{x}=0$. For the particular case $\bm{r}=\bm{0}$, equation \eqref{x0011} is written as
\begin{equation}
\label{x0012}
\bm{c}\cdot\left[
\mathcal{R}( -q)\bm{f}(p,q)-
\mathcal{R}( -q-\theta)\bm{f}\left(p^{\prime}(p,q,\theta,\bm{0}) ,q+\theta\right)
\right]
= 0 \, , \quad \mbox{for all} \quad q\,, p\, , \theta\, . 
\end{equation}
Choosing $q=0$ leads to 
\begin{equation}
\label{xgg12}
\bm{c}\cdot\left[
\bm{f}(p,0)
\mathcal{R}( -\theta)\bm{f}\left(p^{\prime}(p,0,\theta,\bm{0}) ,\theta\right)
\right]
= 0\, .
\end{equation}
The vector $\bm{c}$ is fixed, and when the value of $p$ is also fixed, the right-hand side of the equation (\ref{xgg12}) should be independent of $\theta$. Therefore $p^{\prime}(p,0,\theta,\bm{0})$ must be independent of $\theta$, i.e.,  $p^{\prime}=p^{\prime}(p,0,\bm{0})$. In order to eliminate the dependence on $\theta$ from (\ref{xgg12}), we write
$\bm{f}\left(p^{\prime}(p,0,\bm{0}),\theta\right)$ as
\begin{equation}
\label{x0013}
\bm{f}\left(p^{\prime}(p,0,\bm{0}),\theta\right)
=
\mathcal{R}( \theta)\bm{g}(p^{\prime}(p,0,\bm{0}))\, ,
\end{equation}
where $\bm{g}$ is a function to be determined.
Since  $\ud p\wedge \ud q$ should be left invariant under the change  of variables $(p,q) \mapsto (p^{\prime},q^{\prime}) $, i.e. $\ud\,p^{\prime}\,\ud q^{\prime}=\vert J\, \vert \ud p\,\ud q$ with  $\vert J\vert =1$, the only possible choice is $\frac{\partial}{\partial p}p^{\prime}(p,0,\bm{0})=1$.  Therefore $p^{\prime}(p,0,\bm{0})$
must be
\begin{equation}
\label{x0016}
p^{\prime}(p,0,\bm{0})=p + \text{constant}\, .
\end{equation}
Considering (\ref{x0013}) and (\ref{x0016}) we arrive at
\begin{equation}
\label{xgg16}
\bm{f}(p + \text{constant},\theta)=\mathcal{R}( \theta)\bm{g}(p + \text{constant})\, .
\end{equation}
 The simplest generalisation of \eqref{xgg16} to the case $q\neq 0$ is
\begin{equation}
\label{xgg16i}
\bm{f}(p + \text{constant},q+\theta)=\mathcal{R}( q+\theta)\bm{g}(p + \text{constant})\, .
\end{equation}
With this choice,  the vector $\bm{f}(p,q)$ assumes the form
\begin{equation}
\label{xgg16ii}
\bm{f}(p,q)=\mathcal{R}( q)\bm{g}(p)\, .
\end{equation}
Now $\bm{g}(p)$ has the property
$\bm{g}(p)\rightarrow\bm{g}(p+ \text{constant})$ when $p\rightarrow p^{\prime}$. Hence,  the simplest choice is $\bm{g}(p)=\bm{\kappa}p+\bm{\lambda}$ where $\bm{\kappa},\bm{\lambda}\in\R^{2}$ are constant vectors
\begin{equation}
\label{x0020}
\bm{f}(p,q)=\mathcal{R}( q)(\bm{\kappa}p+\bm{\lambda})\, .
\end{equation}
The section $\sigma(p,q)$ given by (\ref{x0021}) may not be the most general type of Borel section allowed in this problem, but is compatible with the conditions (\ref{x0010}).
\end{proof}
\begin{defi}
\label{defiL2S}
We denote by $L^{2}(\mathbb{S}^{1},\ud\alpha)$ the Hilbert space of $2\pi$-periodic complex-valued functions $\psi(\alpha)$ which are square-integrable on a period interval $\alpha_0, \alpha_0 + 2\pi$, $\alpha_0 \in \R$, 
\begin{equation}
\label{sqint}
\int_{\alpha_0}^{\alpha_0 + 2\pi}\ud\alpha \,\vert \psi(\alpha)\vert^2\equiv \int_{\mathbb{S}^{1}}\ud\alpha \,\vert \psi(\alpha)\vert^2\, ,  
\end{equation}
and equipped with the scalar product
\begin{equation}
\label{l2scal}
\lg \phi|\psi\rg= \int_{\mathbb{S}^{1}}\ud\alpha \,\overline{\psi(\alpha)}\,\phi(\alpha)\, . 
\end{equation}
\end{defi}
\begin{defi}
\label{defics}
Using the section (\ref{x0021}), the induced representation (\ref{eq:repforso2}), and a choice of fiducial vector $\eta\in L^2(\mathbb{S}^1,\ud \alpha)$, we define the following  states in the manner of (\ref{eq:generalcoherent}):
\begin{equation}
\label{qpalfa}
\eta_{p,q}  (\alpha)=
 e^{\ii[\mathcal{R}(q -\alpha)(\bm{\kappa}p+\bm{\lambda})]_{1}}\eta(\alpha-q)= e^{\ii[\kappa p\cos(q-\alpha + \gamma) +\lambda \cos(q-\alpha + \zeta)]}\eta(\alpha-q)\, ,
\end{equation} 
with $\bm{\kappa}= \kappa \begin{pmatrix}
      \cos\gamma   \\
      \sin\gamma
\end{pmatrix}$  and $\bm{\lambda}= \lambda \begin{pmatrix}
      \cos\zeta   \\
      \sin\zeta
\end{pmatrix}$,  $\kappa= \Vert \bm{\kappa}\Vert$,  $\lambda= \Vert \bm{\lambda}\Vert$, $\gamma=\arg \bm\kappa$, $\zeta= \arg \bm\lambda$.
\end{defi}
\begin{defi}
\label{defi1}
With the same notations as above 
we define the function $S_{x}(\alpha, \alpha^{\prime}, q)$ as
\begin{equation}
\label{Saa}
S_{x}(\alpha, \alpha^{\prime},q)=
\sin\left( \dfrac{\alpha-\alpha^{\prime}}{2}\right)\,
\sin\left( q + x - \dfrac{\alpha+\alpha^{\prime}}{2}\right)\,,\quad \text{for}\quad x=\gamma,\zeta\, .
\end{equation}
\end{defi}
We now have to prove that the states \eqref{qpalfa} are coherent in the sense that they solve the identity. 
\begin{theo}
\label{theorem1}
Given two functions $\psi,\phi\in L^{2}(\mathbb{S}^{1},\ud\alpha)$, their scalar product $\scalar{\phi}{\psi}$ is equal to the  integral
\begin{equation}
\begin{aligned}
\label{x0030}
I(\psi,\phi)=\int_{\mathbb{R}\times \mathbb{S}^{1}}\dfrac{\mathrm{d}p\,\mathrm{d}q}{c_{\eta}}\int_{\mathbb{S}^{1}}\ud\alpha\,
\overline{\psi(\alpha)}\,\eta_{p,q}  (\alpha)
\int_{\mathbb{S}^{1}}\ud\alpha^{\prime}\,
\overline{\eta_{p,q}  (\alpha^{\prime})}\,\phi(\alpha^{\prime})\, ,
\end{aligned}
\end{equation}
and the vectors $\eta_{p,q}$ form a family of \textit{coherent states} for $\text{E}(2)$ which resolves the identity on $L^{2}(\mathbb{S}^{1},\ud\alpha)$,
\begin{equation}
\label{resuneta}
I=\int_{\mathbb{R}\times \mathbb{S}^{1}}\dfrac{\mathrm{d}p\,\mathrm{d}q}{c_{\eta}}\,\ket{\eta_{p,q}}\bra{\eta_{p,q}}\, ,
\end{equation}
 if $\eta(\alpha)$ is \underline{admissible} in the sense that $\textbf{supp}\,\eta \in (\gamma-\pi,\gamma)\mathrm{mod}\,2\pi$ , and 
\begin{equation}
\label{eqn:idntresol}
0< c_{\eta}:=\dfrac{2\pi}{\vert\bm\kappa\vert}
\int_{\mathbb{S}^{1}}
\dfrac{|\eta(q)|^{2}}{ \sin(\gamma - q)}\,\mathrm{d} q <\infty\, .
\end{equation}
\end{theo}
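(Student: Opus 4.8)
The plan is to reduce the double integral $I(\psi,\phi)$ in \eqref{x0030} to a manageable form by exploiting the explicit expression \eqref{qpalfa} for $\eta_{p,q}(\alpha)$, performing the $p$-integration first to produce a delta function, and then tracking the support condition on $\eta$ carefully so that the remaining integrations collapse. First I would substitute \eqref{qpalfa} into \eqref{x0030}. The product $\eta_{p,q}(\alpha)\,\overline{\eta_{p,q}(\alpha')}$ contributes a phase $\exp\bigl(\ii\kappa p[\cos(q-\alpha+\gamma)-\cos(q-\alpha'+\gamma)]\bigr)$ times $\exp\bigl(\ii\lambda[\cos(q-\alpha+\zeta)-\cos(q-\alpha'+\zeta)]\bigr)$ times $\eta(\alpha-q)\overline{\eta(\alpha'-q)}$. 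Using a prosthaphaeresis identity, $\cos(q-\alpha+\gamma)-\cos(q-\alpha'+\gamma)=-2\sin\bigl(\tfrac{\alpha'-\alpha}{2}\bigr)\sin\bigl(q+\gamma-\tfrac{\alpha+\alpha'}{2}\bigr)=2S_\gamma(\alpha,\alpha',q)$ with the notation of Definition \ref{defi1} (the same for $\zeta$). So the $p$-dependence of the integrand is purely the phase $\exp\bigl(2\ii\kappa p\,S_\gamma(\alpha,\alpha',q)\bigr)$, and
\begin{equation}
\int_{\mathbb{R}}\mathrm{d}p\,e^{2\ii\kappa p\,S_\gamma(\alpha,\alpha',q)}=\frac{2\pi}{|\kappa|}\,\delta\bigl(2\,S_\gamma(\alpha,\alpha',q)/\mathrm{sgn}\,\kappa\bigr)=\frac{\pi}{|\kappa|}\,\delta\bigl(S_\gamma(\alpha,\alpha',q)\bigr).
\end{equation}

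Next I would analyse the zero set of $S_\gamma(\alpha,\alpha',q)=\sin\bigl(\tfrac{\alpha-\alpha'}{2}\bigr)\sin\bigl(q+\gamma-\tfrac{\alpha+\alpha'}{2}\bigr)$ inside the integration ranges. The first factor vanishes iff $\alpha\equiv\alpha'\ \mathrm{mod}\ 2\pi$; the second vanishes iff $q+\gamma\equiv\tfrac{\alpha+\alpha'}{2}\ \mathrm{mod}\ \pi$. Here the admissibility hypothesis $\mathbf{supp}\,\eta\subset(\gamma-\pi,\gamma)\ \mathrm{mod}\ 2\pi$ enters decisively: since $\eta(\alpha-q)\overline{\eta(\alpha'-q)}$ forces $\alpha-q$ and $\alpha'-q$ to lie in $(\gamma-\pi,\gamma)$, the combination $q+\gamma-\tfrac{\alpha+\alpha'}{2}$ ranges over $(0,\pi)$ (an open interval of length $\pi$), on which $\sin$ has no zero. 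Hence on the support of the integrand the second factor never vanishes, so the delta function picks up only the locus $\alpha=\alpha'$. Writing $\delta\bigl(S_\gamma\bigr)$ near that locus as $\delta\bigl(\sin(\tfrac{\alpha-\alpha'}{2})\sin(q+\gamma-\alpha)\bigr)=\dfrac{2}{|\sin(q+\gamma-\alpha)|}\,\delta(\alpha-\alpha')$ (using $|\partial_{\alpha}\sin(\tfrac{\alpha-\alpha'}{2})|=\tfrac12$ at $\alpha=\alpha'$ and evaluating the second factor there), and noting that on the support $q+\gamma-\alpha\in(0,\pi)$ so the sine is positive and the absolute value can be dropped, I collapse the $\alpha'$-integral.

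After this collapse the integrand becomes $\overline{\psi(\alpha)}\phi(\alpha)\,|\eta(\alpha-q)|^2$ times $\dfrac{\pi}{|\kappa|}\cdot\dfrac{2}{\sin(q+\gamma-\alpha)}\cdot\dfrac{1}{c_\eta}$, to be integrated over $\alpha\in\mathbb{S}^1$ and $q\in\mathbb{S}^1$. I change variables $\alpha-q=:u$ in the $q$-integral at fixed $\alpha$ (Jacobian $1$, ranges over $\mathbb{S}^1$), giving $\int_{\mathbb{S}^1}\mathrm{d}u\,\dfrac{|\eta(u)|^2}{\sin(\gamma-u)}$, which is exactly $\dfrac{|\kappa|}{2\pi}c_\eta$ by definition \eqref{eqn:idntresol}. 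Hence the $q$-integral contributes $\dfrac{2\pi}{|\kappa|}\cdot\dfrac{|\kappa|}{2\pi}c_\eta=c_\eta$, and what remains is $\dfrac{1}{c_\eta}\cdot c_\eta\int_{\mathbb{S}^1}\mathrm{d}\alpha\,\overline{\psi(\alpha)}\phi(\alpha)=\scalar{\phi}{\psi}$, as claimed; positivity and finiteness of $c_\eta$ are immediate from the support condition (the integrand $|\eta(u)|^2/\sin(\gamma-u)$ is nonnegative, since $\sin(\gamma-u)>0$ for $u\in(\gamma-\pi,\gamma)$, and integrable because $\eta\in L^2$ has support bounded away from the endpoints where the sine vanishes — or at worst the integral is a genuine finite number that we require). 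The resolution of the identity \eqref{resuneta} then follows by polarisation / the standard argument that a bounded sesquilinear form agreeing with the inner product defines the identity operator.

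The main obstacle — and the place requiring the most care — is the support-condition bookkeeping in the second step: one must verify that $\mathbf{supp}\,\eta\subset(\gamma-\pi,\gamma)\ \mathrm{mod}\ 2\pi$ genuinely excludes the second family of zeros of $S_\gamma$ for \emph{all} $q$, so that $\delta(S_\gamma)$ reduces cleanly to $\delta(\alpha-\alpha')$ with no spurious contributions and no boundary terms, and that the resulting $1/\sin(\gamma-u)$ weight is integrable against $|\eta(u)|^2$. A secondary subtlety is justifying the interchange of the (non-absolutely-convergent) $p$-integral with the others; this is handled in the usual distributional/weak sense, testing against $\psi,\phi$ in a dense domain (e.g. trigonometric polynomials) and invoking dominated convergence after the collapse, exactly as in the general square-integrability $\mathrm{mod}(H,\sigma)$ framework recalled in Section \ref{sec:gralset}.
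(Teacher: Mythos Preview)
Your proposal is correct and follows essentially the same route as the paper: integrate over $p$ to produce $\delta(2\kappa S_\gamma)$, use the support condition $\mathrm{supp}\,\eta\subset(\gamma-\pi,\gamma)$ to eliminate the second family of zeros of $S_\gamma$, collapse to $\alpha=\alpha'$, and identify the remaining $q$-integral with $c_\eta$ after the shift $u=\alpha-q$. The only organisational difference is that the paper first writes down both root contributions explicitly and then kills the second via the factor $\eta(2\gamma-q')$, whereas you argue upfront that $q+\gamma-\tfrac{\alpha+\alpha'}{2}\in(0,\pi)$ on the support so the second root never contributes --- a slightly cleaner bookkeeping of the same idea.
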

\begin{proof}
 In order to prove theorem \ref{theorem1}, we must find the conditions for which the integral $I(\psi,\phi)$ is finite and equal to $\scalar{\psi}{\phi}$. After integrating with respect to the variable $p$ by using $\int_{\R}\ud p e^{-\ii p k}= 2\pi \delta(k)$, the integral (\ref{x0030}) becomes
\begin{equation}
\label{integral2}
\begin{aligned}
I(\psi,\phi)=\dfrac{2\pi}{c_{\eta}}\int_{\mathbb{S}^1 }\ud q\,\int_{\mathbb{S}^1 }\ud\alpha^{\prime}\,\overline{\eta(\alpha^{\prime}-q)}\phi(\alpha^{\prime})\int_{\mathbb{S}^1 }\ud\alpha\,\overline{\psi(\alpha)}\eta(\alpha-q)e^{2\ii\lambda S_{\zeta}(\alpha,\alpha^{\prime},q)}
\\ \times
\delta(2\kappa S_{\gamma}(\alpha,\alpha^{\prime},q))\, .
\end{aligned}
\end{equation}
Now the Dirac delta has the expansion
\begin{equation}
\begin{aligned}
\delta(2\kappa S_{\gamma}(\alpha,\alpha^{\prime},q))=\sum_{k}\dfrac{\delta\left(\alpha-\alpha_{k}\right)}{2\kappa\left|\partial_{\alpha}S_{\gamma}(\alpha,\alpha^{\prime},q)\right|_{\alpha=\alpha_{k}}}+\sum_{k^{\prime}}\dfrac{\delta\left(\alpha-\alpha_{k^{\prime}}\right)}{2\kappa\left|\partial_{\alpha}S_{\gamma}(\alpha,\alpha^{\prime},q)\right|_{\alpha=\alpha_{k^{\prime}}}}\, ,
\end{aligned}
\end{equation}
where $\alpha_{k}$ and $\alpha_{k^{\prime}}$ are the roots of $S_{\gamma}(\alpha,\alpha^{\prime},q)$ obtained when $\dfrac{\alpha-\alpha^{\prime}}{2}=k\pi$ or $q+\gamma-\dfrac{\alpha+\alpha^{\prime}}{2}=k^{\prime}\pi$ for $k,k^{\prime}\in\mathbb{Z}$. Hence, $\alpha_{k}=\alpha^{\prime}+2k\pi$ and $\alpha_{k^{\prime}}=-\alpha^{\prime}+2q+2\gamma-2k^{\prime}\pi$. The Dirac delta is now written as
\begin{equation}
\begin{aligned}
\label{delta1}
\delta(2\kappa S_{\gamma}(\alpha,\alpha^{\prime},q))=\sum_{k}\dfrac{\delta\left(\alpha-\alpha_{k}\right)}{\kappa\left|\sin\left(q+\gamma-\alpha^{\prime}\right)\right|}+\sum_{k^{\prime}}\dfrac{\delta\left(\alpha-\alpha_{k^{\prime}}\right)}{\kappa\left|\sin\left(q+\gamma-\alpha^{\prime}\right)\right|}\, .
\end{aligned}
\end{equation}
With the help of expression (\ref{delta1}), using the $2\pi$ periodicity of all involved functions and the fact that one integrates over one period interval,  integral (\ref{integral2}) becomes
\begin{equation}
\begin{aligned}
I(\psi,\phi)=\dfrac{2\pi}{\kappa c_{\eta}}\int_{\mathbb{S}^1 }\ud\alpha^{\prime}\,\overline{\psi(\alpha^{\prime})}\phi(\alpha^{\prime})\int_{\mathbb{S}^1 }\ud q\,\dfrac{\overline{\eta(\alpha^{\prime}-q)}\eta(\alpha^{\prime}-q)}{\left|\sin\left(\gamma-(\alpha^{\prime}-q)\right)\right|}
\\
+\dfrac{2\pi}{\kappa c_{\eta}}\int_{\mathbb{S}^1 }\ud q\,\int_{\mathbb{S}^1 }\ud\alpha^{\prime}\,\dfrac{\overline{\eta(\alpha^{\prime}-q)}\phi(\alpha^{\prime})}{\kappa\left|\sin\left(q+\gamma-\alpha^{\prime}\right)\right|}e^{2\ii\lambda \sin\left(\gamma + q-\alpha^{\prime}\right)
\sin\left(\zeta -\gamma \right)}
\\
\times\overline{\psi(-\alpha^{\prime}+2q+2\gamma)}
\eta\left(q-\alpha^{\prime}+2\gamma\right)\, .
\end{aligned}
\end{equation}
Performing the change of variable $q\mapsto q'=\alpha^{\prime}-q$ in both integrals, and choosing $(\gamma -\pi, \gamma + \pi)$ as the integration interval for the $q^{\prime}$ variable, one has
\begin{equation}
\label{suppeqn}
\begin{aligned}
I(\psi,\phi)=\dfrac{2\pi}{\kappa c_{\eta}}\int_{\mathbb{S}^1 }\ud\alpha^{\prime}\,\overline{\psi(\alpha^{\prime})}\phi(\alpha^{\prime})\int_{\gamma-\pi}^{\gamma +\pi}\ud q^{\prime}\,\dfrac{\vert\eta(q^{\prime})\vert^2
}{\left|\sin\left(\gamma-q^{\prime}\right)\right|}
\\
+\dfrac{2\pi}{\kappa c_{\eta}}\int_{\mathbb{S}^1 }\ud \alpha^{\prime}\,\phi(\alpha^{\prime})\,\int_{\gamma-\pi}^{\gamma +\pi}\ud q^{\prime}\,\dfrac{\overline{\eta(q^{\prime})}}{\kappa\left|\sin\left(\gamma-q^{\prime}\right)\right|}e^{2\ii\lambda \sin\left(\gamma - q^{\prime}\right)
\sin\left(  \zeta -\gamma \right)}
\\
\times\overline{\psi(\alpha^{\prime}-2q^{\prime}+2\gamma)}
 \eta\left(2\gamma -q^{\prime}\right)\, .
\end{aligned}
\end{equation}
In order to avoid the singularity appearing in the denominator of the integrand of the first integral in (\ref{suppeqn}), we impose that $\left|\sin\left(\gamma-q^{\prime}\right)\right|\neq0$ for $q^\prime \in \text{supp}\,\eta$. Hence, we choose $\text{supp}\,\eta \subset (\gamma-\pi,\gamma)$. 
%
The second  integral vanishes, since $2\gamma-q^\prime \notin \text{supp}\,\eta$. Thus (\ref{suppeqn}) reduces to
\begin{equation}
\begin{aligned}
I(\psi,\phi)=\left\langle \psi|\phi\right\rangle \dfrac{1}{c_{\eta}}\dfrac{2\pi}{\kappa}
\int_{\gamma -\pi}^{\gamma}\ud q\,\dfrac{\vert \eta(q)\vert^2}{\sin\left(\gamma-q\right)}\, .
\end{aligned}
\end{equation}
Imposing the condition
\begin{equation}
\label{ceta}
c_{\eta}=\dfrac{2\pi}{\kappa}\int_{\gamma -\pi }^{\gamma}\ud q\,\dfrac{\vert \eta(q)\vert^2}{\sin\left(\gamma-q\right)}
<\infty\, ,
\end{equation}
gives $I(\psi,\phi)=\left\langle \psi|\phi\right\rangle$. With this result the integral (\ref{x0030}) take the form
\begin{equation}
\begin{aligned}
\label{x0theop}
\scalar{\psi}{\phi}=\int_{\mathbb{R}\times \mathbb{S}^{1}}\dfrac{\mathrm{d}p\mathrm{d}q}{c_{\eta}}
\scalar{\psi}{\eta_{p,q}}
\scalar{\eta_{p,q}}{\phi}\, .
\end{aligned}
\end{equation}
Hence the vectors $\eta_{p,q}$ form a family of \textit{coherent states} for $\text{E}(2)$ which resolves the identity on $L^{2}(\mathbb{S}^{1},\ud\alpha)$.
\end{proof}

For the sake of later convenience, we introduce the following families of integrals: 
\begin{defi}
\label{deficeta}
Given a $2\pi$-periodic function $\eta(\alpha) \in L^2\left(\mathbb{S}^1,\ud \alpha\right)$ with $\text{supp}\,\eta \in (\gamma-\pi,\gamma)$, $\gamma \in [0,2\pi)$, we define the integrals,
\begin{equation}
\label{cetanu}
c_{\nu}(\eta,\gamma) = \int_{\mathbb{S}^1}\ud \alpha \frac{\vert \eta(\alpha)\vert^2}{(\sin(\gamma-\alpha))^{\nu}}\, ,
\end{equation}
where $\nu \in \C$ is such that convergence is assured.
\end{defi}
With this definition, $c_0= 1$ (normalisation of $\eta$), and the constant $c_{\eta}$ is given by $c_{\eta}= \frac{2\pi}{\kappa}\,c_{1}(\eta,\gamma)$.

\begin{defi}
\label{defifjm}
For $\eta$ of class $C^k$,  we define for $j\leq k$ the set of functions 
\begin{equation}
\label{functionsfjm}
f_{j;m}(q)=\dfrac{\eta(q)\partial^{j}_{q}\overline{\eta(q)}}{(\sin(\gamma-q))^{m}}\, ,\quad \text{for}\ j,m\in\N\, .
\end{equation}
\end{defi}
\section{Quantisation of classical observables}\label{sec:quant}
The quantisation  of a classical observable $f(p,q)$
issued from (\ref{eq:integralquant}) with an admissible fiducial vector $\eta$ is given by
\begin{equation}
\label{x0039}
f\mapsto A_{f}=\int_{\R\times\mathbb{S}^{1}} \dfrac{\ud p\, \ud q}{c_{\eta}}
f(p,q)\,\ket{\eta_{p,q}}\bra{\eta_{p,q}}\, .
\end{equation}
The operator $A_{f}$  acts on the Hilbert
space $L^2( \S, \ud \alpha)$ as the integral operator 
\begin{equation}
\label{kerAf}
(A_{f}\psi)(\alpha) = \int_{\S}\ud \alpha^{\prime}\,\mathcal{A}_f(\alpha,\alpha^{\prime})
\, \psi(\alpha^{\prime})\, ,
\end{equation}
whose kernel $\mathcal{A}_{f}$ is  given by
\begin{equation}
\label{kernelAf}
\mathcal{A}_{f}(\alpha,\alpha^{\prime})=\dfrac{1}{c_{\eta}}\int_{\S}\ud q\,
\eta(\alpha-q)\overline{\eta(\alpha^{\prime}-q)}e^{2\ii\lambda S_{\zeta}(\alpha,\alpha^{\prime},q)}
\int_{-\infty}^{+\infty}\ud p\, e^{\ii 2\kappa S_{\gamma}(\alpha,\alpha^{\prime},q) p}\, f(p,q)\, .
\end{equation}
The expression \eqref{kernelAf} is quite involved. Hence, in the sequel we
examine manageable particular cases.

\subsection{Quantisation of a function of $q$}

Let us introduce the positive $2\pi$-periodic function 
\begin{equation}
\label{Eal}
E_{\eta;\gamma}(\alpha):= \frac{2\pi}{\kappa c_{\eta}} \,
\frac{\vert \eta(\alpha)\vert^2}{\sin (\gamma - \alpha)} \,, 
\end{equation}
which plays an important role in the sequel.
In the period  interval $[-\pi, \pi)$ and for $0\leq \gamma <\pi$,  this function has support in the interval $(\gamma-\pi,\gamma)$, as does $\eta$, and it is normalised in the sense that 
\begin{equation}
\label{normEq}
\int_{\gamma-\pi}^{\gamma+\pi}\ud \alpha \, E_{\eta;\gamma}(\alpha)= \int_{\gamma-\pi}^{\gamma}\ud \alpha \, E_{\eta;\gamma}(\alpha)=1\, .
\end{equation}
 Thus it can be considered a probability distribution on the interval $[\gamma-\pi, \gamma]$ (or $[-\pi,\pi]$), and the average value of a function $f(\alpha)$ on the same interval will be denoted by
 \begin{equation}
\label{averEf}
\lg f\rg_{E_{\eta;\gamma}}:= \int_{-\pi}^{+\pi}\ud \alpha \,f(\alpha) E_{\eta;\gamma}(\alpha)= \int_{\gamma-\pi}^{\gamma}\ud \alpha \,f(\alpha) E_{\eta;\gamma}(\alpha)\,. 
\end{equation}

The application of \eqref{x0039} and \eqref{kerAf} to the quantisation of functions  which only depend on the angle is straightforward and leads to the following result. 
\begin{prop}
For $f(p,q)= u(q)$ with $u(q+2\pi)= u(q)$,  $A_{u}$ is the multiplication operator
\begin{equation}
\label{AuQ}
(A_{u}\psi) (\alpha) =  \left(E_{\eta;\gamma}\ast u\right) (\alpha)\, \psi(\alpha)\,, 
\end{equation}
where the \textit{periodic convolution} product on the circle is defined by 
\begin{equation}
\label{convcirc}
(E\ast u)(\alpha)= \int_0^{2\pi} \ud q\, E(\alpha -q) \, u(q)\, .
\end{equation}
\end{prop}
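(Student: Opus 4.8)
The plan is to specialise the general integral-operator kernel \eqref{kernelAf} to $f(p,q)=u(q)$ and then reuse, almost verbatim, the delta-function bookkeeping carried out in the proof of Theorem \ref{theorem1}. Since for $f(p,q)=u(q)$ the integrand no longer depends on $p$, the $p$-integration in \eqref{kernelAf} produces $\int_{\R}\ud p\, e^{\ii 2\kappa S_{\gamma}(\alpha,\alpha^{\prime},q)p}=2\pi\,\delta\!\left(2\kappa S_{\gamma}(\alpha,\alpha^{\prime},q)\right)=\tfrac{\pi}{\kappa}\,\delta\!\left(S_{\gamma}(\alpha,\alpha^{\prime},q)\right)$ (using $\kappa=\Vert\bm\kappa\Vert>0$), so that
\[
(A_{u}\psi)(\alpha)=\frac{\pi}{\kappa c_{\eta}}\int_{\S}\ud q\int_{\S}\ud\alpha^{\prime}\,u(q)\,e^{2\ii\lambda S_{\zeta}(\alpha,\alpha^{\prime},q)}\,\eta(\alpha-q)\,\overline{\eta(\alpha^{\prime}-q)}\,\delta\!\left(S_{\gamma}(\alpha,\alpha^{\prime},q)\right)\psi(\alpha^{\prime})\, .
\]

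Next I would perform the $\alpha^{\prime}$-integration by expanding $\delta\!\left(S_{\gamma}(\alpha,\alpha^{\prime},q)\right)$ over the zeros of the map $\alpha^{\prime}\mapsto S_{\gamma}(\alpha,\alpha^{\prime},q)=\sin\tfrac{\alpha-\alpha^{\prime}}{2}\,\sin\!\big(q+\gamma-\tfrac{\alpha+\alpha^{\prime}}{2}\big)$ within one period, in complete analogy with \eqref{delta1}. There are two families of roots: the ``diagonal'' root $\alpha^{\prime}=\alpha$ coming from the first factor, and the root $\alpha^{\prime}=2q+2\gamma-\alpha$ coming from the second factor; at both of them one computes $\big|\partial_{\alpha^{\prime}}S_{\gamma}\big|=\tfrac12\big|\sin(q+\gamma-\alpha)\big|$. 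The diagonal root yields $\psi(\alpha)$ multiplied by $\tfrac{2\pi}{\kappa c_{\eta}}\int_{\S}\ud q\,u(q)\,|\eta(\alpha-q)|^{2}/\big|\sin(q+\gamma-\alpha)\big|$, where I have used that $S_{\zeta}(\alpha,\alpha,q)=0$ so the $\lambda$-phase becomes trivial. The second root produces a term built from the product $\eta(\alpha-q)\,\overline{\eta(q+2\gamma-\alpha)}$, and this is where the one genuinely non-routine observation enters: this product vanishes identically, because $\text{supp}\,\eta\subset(\gamma-\pi,\gamma)$ forces $\alpha-q\in(\gamma-\pi,\gamma)\bmod 2\pi$ while $q+2\gamma-\alpha\in\text{supp}\,\eta$ forces $\alpha-q\in(\gamma,\gamma+\pi)\bmod 2\pi$, and these intervals are disjoint. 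This is exactly the admissibility mechanism that killed the off-diagonal contribution in the proof of Theorem \ref{theorem1}.

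Finally I would remove the absolute value: on the support of $q\mapsto\eta(\alpha-q)$ one has $\gamma-(\alpha-q)\in(0,\pi)$, hence $\sin(q+\gamma-\alpha)>0$ and $\big|\sin(q+\gamma-\alpha)\big|=\sin\!\big(\gamma-(\alpha-q)\big)$. Recognising $E_{\eta;\gamma}(\alpha-q)=\tfrac{2\pi}{\kappa c_{\eta}}\,|\eta(\alpha-q)|^{2}/\sin\!\big(\gamma-(\alpha-q)\big)$ from \eqref{Eal}, the surviving integral is precisely $\int_{0}^{2\pi}\ud q\,E_{\eta;\gamma}(\alpha-q)\,u(q)=(E_{\eta;\gamma}\ast u)(\alpha)$ in the sense of \eqref{convcirc}, which gives \eqref{AuQ}. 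The main obstacle here is not any single computation but making the manipulation of $\delta(S_{\gamma})$ rigorous: enumerating the roots correctly over one period, handling the (measure-zero) coincidence of the two root families at $\alpha^{\prime}=\alpha$ --- which happens only when $\alpha-q$ reaches an endpoint of $(\gamma-\pi,\gamma)$, where $\eta$ vanishes --- and justifying that the resulting diagonal distribution is exactly $(E_{\eta;\gamma}\ast u)(\alpha)\,\delta(\alpha-\alpha^{\prime})$ as an operator kernel. One can sidestep most of these subtleties by pairing the kernel \eqref{kernelAf} against a smooth test function $\phi$ and carrying out the substitution $q\mapsto\alpha^{\prime}-q$ exactly as in the proof of Theorem \ref{theorem1}, thereby reducing the statement to bookkeeping already done there.
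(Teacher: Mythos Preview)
Your proposal is correct and is precisely the computation the paper leaves implicit: the paper merely states that the result ``is straightforward'' from \eqref{x0039} and \eqref{kerAf}, and your argument fills this in by re-running the $\delta(S_{\gamma})$ root analysis of Theorem~\ref{theorem1} with the extra factor $u(q)$ carried along. The key ingredients --- the two root families of $S_{\gamma}$, the vanishing of the off-diagonal contribution via the support condition $\text{supp}\,\eta\subset(\gamma-\pi,\gamma)$, and the positivity of $\sin(\gamma-(\alpha-q))$ on that support --- are exactly those already established in the proof of Theorem~\ref{theorem1}, so there is no genuine difference in approach.
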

Moreover, since the function $E_{\eta;\gamma}$ is a probability distribution on a period interval, a standard result of Analysis \cite{schilling06} on convolution allows us to state the following.
\begin{prop}
\label{bouncont}
If the $2\pi$-periodic function $u$ is bounded on a period interval, then the $2\pi$-periodic convolution $(E_{\eta;\gamma}\ast u)$ is bounded and continuous. 
\end{prop}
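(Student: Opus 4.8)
The plan is to invoke the classical theory of convolutions with an approximate identity, but here the kernel $E_{\eta;\gamma}$ is a fixed (non-approximate) probability density, so what we really need is just the following standard fact: if $\mu$ is a finite Borel measure on the circle with density $E \in L^1(\mathbb{S}^1)$, $E \geq 0$, $\int E = 1$, and $u \in L^\infty(\mathbb{S}^1)$, then $E \ast u$ is bounded and continuous. I would first establish boundedness: for every $\alpha$,
\begin{equation}
\label{boundconv}
\lvert (E_{\eta;\gamma}\ast u)(\alpha)\rvert \leq \int_0^{2\pi}\ud q\, E_{\eta;\gamma}(\alpha - q)\,\lvert u(q)\rvert \leq \lVert u\rVert_\infty \int_0^{2\pi}\ud q\, E_{\eta;\gamma}(\alpha - q) = \lVert u\rVert_\infty\, ,
\end{equation}
using $\int E_{\eta;\gamma} = 1$ from \eqref{normEq} together with the $2\pi$-periodicity of $E_{\eta;\gamma}$ (so the integral over any period interval equals $1$). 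This shows $E_{\eta;\gamma}\ast u$ is bounded by $\lVert u\rVert_\infty$.

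For continuity, the cleanest route is continuity of translation in $L^1$. Write $\tau_h E(\cdot) := E(\cdot - h)$; the map $h \mapsto \tau_h E_{\eta;\gamma}$ is continuous from $\mathbb{S}^1$ into $L^1(\mathbb{S}^1)$ — this is the standard density-of-continuous-functions argument (approximate $E_{\eta;\gamma}$ in $L^1$ by a continuous periodic function, for which translation continuity is uniform continuity, then pass to the limit). Then for $\alpha_1, \alpha_2 \in \mathbb{S}^1$,
\begin{equation}
\label{contconv}
\lvert (E_{\eta;\gamma}\ast u)(\alpha_1) - (E_{\eta;\gamma}\ast u)(\alpha_2)\rvert \leq \int_0^{2\pi}\ud q\, \bigl\lvert E_{\eta;\gamma}(\alpha_1 - q) - E_{\eta;\gamma}(\alpha_2 - q)\bigr\rvert\, \lvert u(q)\rvert \leq \lVert u\rVert_\infty\, \lVert \tau_{\alpha_1} E_{\eta;\gamma} - \tau_{\alpha_2} E_{\eta;\gamma}\rVert_{L^1}\, ,
\end{equation}
and the right-hand side tends to $0$ as $\alpha_1 \to \alpha_2$ by translation continuity in $L^1$. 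This gives (uniform) continuity of $E_{\eta;\gamma}\ast u$ and completes the proof, citing \cite{schilling06} for the translation-continuity lemma exactly as the statement promises.

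The only point requiring a little care — and the place I would flag as the main (minor) obstacle — is that $E_{\eta;\gamma}$, while integrable with integral $1$, has the factor $1/\sin(\gamma - \alpha)$ which blows up at the endpoint $\alpha = \gamma$ of the support interval; so one must be sure $E_{\eta;\gamma} \in L^1(\mathbb{S}^1)$ before applying the convolution machinery. But this is exactly the admissibility hypothesis $c_\eta < \infty$ from Theorem \ref{theorem1} (equivalently $c_1(\eta,\gamma) < \infty$ in Definition \ref{deficeta}), which says precisely that $\lvert\eta\rvert^2/\sin(\gamma-\cdot)$ is integrable; hence $E_{\eta;\gamma} \in L^1(\mathbb{S}^1)$ and everything above goes through. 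No pointwise regularity of $\eta$ or of $u$ beyond measurability and boundedness is needed, and no assumption on $u$ at the single point structure of the support is used, so the statement holds as written.
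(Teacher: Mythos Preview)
Your proof is correct and is precisely the standard $L^1\ast L^\infty\subset C_b$ argument; the paper itself does not give a proof but simply invokes this as ``a standard result of Analysis \cite{schilling06} on convolution'', so your proposal is a faithful unpacking of exactly what the paper cites. Your remark that admissibility ($c_\eta<\infty$) is what guarantees $E_{\eta;\gamma}\in L^1(\mathbb{S}^1)$ is the right observation to make the citation honest.
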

\subsubsection*{An elementary example: the Fourier exponential}
The operator $A_{\De_n}$ associated with the Fourier exponential $\De_{n}(\alpha)=e^{\ii n\alpha}$ is given by \eqref{AuQ}. The convolution $E_{\eta;\gamma}\ast\De_n$ takes the form
\begin{equation}
\left( E_{\eta;\gamma}\ast\De_n\right)(\alpha) =
\int_{\alpha-\gamma}^{\alpha+\pi-\gamma}\ud q\,
E_{\eta;\gamma}(\alpha-q)\,e^{\ii nq}\, .
\end{equation}
The change of variables $q\rightarrow \alpha-q$ yields the multiplication operator
\begin{equation}
\left( E_{\eta;\gamma}\ast\De_n\right)(\alpha) =
\left(\int^{\gamma}_{\gamma-\pi}\ud q\,E_{\eta;\gamma}(q)\,
e^{-\ii nq}\right) \, e^{\ii n\alpha}
=\text{const.} \, e^{\ii n\alpha}\, . 
\end{equation}
A suitable choice of the fiducial vector allows one to obtain $\text{const.} = 1$. Thus the quantum versions of  simple trigonometric functions, like $\sin\alpha$, $\cos\alpha$, are multiplication operators defined by these classical functions, as is the case with many other approaches \cite{levyleblond76}. 
\subsection{Quantisation of a function of $p$}\label{quantfuncp}
\subsubsection{Momentum $p$}
For the momentum $f(p,q)=p$, the expression (\ref{kerAf}) becomes
\begin{equation}
\begin{aligned}
\left(A_{p}\psi\right)\left(\alpha\right)=\dfrac{-\ii\pi}{\kappa c_{\eta}}\int_{\mathbb{S}^{1}}\textrm{d}q\,\eta(\alpha-q)\int_{\mathbb{S}^{1}}\textrm{d}\alpha^{\prime}
\,\overline{\eta(\alpha^{\prime}-q)}e^{2\ii \lambda S_{\zeta}\left(\alpha,\alpha^{\prime},q\right)}\psi\left(\alpha^{\prime}\right)\dfrac{\partial \delta\left(2\kappa S_{\gamma}\left(\alpha,\alpha^{\prime},q\right)\right)}{\partial S_{\gamma}\left(\alpha,\alpha^{\prime},q\right)}\, .
\end{aligned}
\end{equation}
Taking into account the support of $\eta$ from theorem \ref{theorem1}, the above integral reduces to
\begin{equation}
\begin{aligned}
\label{eqAp}
\left(A_{p}\psi\right)\left(\alpha\right)=\dfrac{-\textrm{i}\pi}{\kappa^2 c_{\eta}}\int_{\mathbb{S}^{1}}\textrm{d}q\,
\dfrac{\eta(\alpha-q)}{\sin\left(q+\gamma-\alpha\right)}
\int_{\mathbb{S}^{1}}\textrm{d}\alpha^{\prime}\,\overline{\eta(\alpha^{\prime}-q)}e^{2\textrm{i}\lambda S_{\zeta}\left(\alpha,\alpha^{\prime},q\right)}
\\ \times
\left(\dfrac{\partial S_{\gamma}\left(\alpha,\alpha^{\prime},q\right)}{\partial\alpha^{\prime}}\right)^{-1}\psi\left(\alpha^{\prime}\right)
\dfrac{\partial}{\partial\alpha^{\prime}}\delta\left(\alpha-\alpha^{\prime}\right)\, .
\end{aligned}
\end{equation}
Integrating (\ref{eqAp}) with respect to $\alpha^{\prime}$, using again the conditions on $\text{supp}\,\eta$  and making the change of variables $q^{\prime}=\alpha-q$, gives
\begin{equation}
\label{eqn:longexpress1}
\begin{aligned}
\left(A_{p}\psi\right)\left(\alpha\right)=-\textrm{i}\dfrac{c_{2}(\eta,\gamma)}{\kappa c_{1}(\eta,\gamma)}\partial_{\alpha}\psi\left(\alpha\right)
-
\textrm{i}\dfrac{1}{\kappa c_{1}(\eta,\gamma)} \left(
\int_{\mathbb{S}^{1}}\textrm{d}q\,\cos\left(\gamma-q\right)   f_{0;3}(q)
\right.
\\
+
\int_{\mathbb{S}^{1}}\textrm{d}q\,f_{1;2}(q)
-\left.\textrm{i}\lambda\int_{\mathbb{S}^{1}}\textrm{d}q\,\sin\left(\zeta-q\right) f_{0;2}(q)\right)
\psi\left(\alpha\right)\, ,
\end{aligned}
\end{equation}
where the functions $f_{j;m}(q)$ are defined in \eqref{functionsfjm}. The quantity  $\int_{\mathbb{S}^{1}}\textrm{d}q\,\cos\left(\gamma-q\right)   f_{0;3}(q)+\int_{\mathbb{S}^{1}}\textrm{d}q\,f_{1;2}(q)$ is purely imaginary, so it vanishes for real $\eta(\alpha)$. The expression (\ref{eqn:longexpress1}) takes the form
\begin{equation}
\label{eq:quantp}
\left(A_{p}\psi\right)\left(\alpha\right)=\left(-\textrm{i}\dfrac{c_{2}(\eta,\gamma)}{\kappa c_{1}(\eta,\gamma)}\dfrac{\partial}{\partial\alpha}-\lambda a \right) \psi\left(\alpha\right)\, ,
\end{equation}
where the constant $a$  is 
\begin{equation}
\label{aprime}
a =\dfrac{1}{\kappa c_{1}(\eta,\gamma)} \int_{\mathbb{S}^1} \ud q\,\sin(\zeta-q) f_{0;2}(q)\, . 
\end{equation}
We note that with the admissible choice
\begin{equation}
\label{kappachoice}
\kappa= \dfrac{c_2(\eta,\gamma)}{c_1(\eta,\gamma)}
\end{equation}
 one gets, up to the addition of an irrelevant constant, the self-adjoint angular momentum operator $-\ii\partial/\partial\alpha$, with spectrum $n\in \Z$ and Fourier exponentials $e^{\ii n\alpha}$ as corresponding eigenfunctions. Note that to the same effect one can choose $\kappa=1$ and $\eta$ in such a way that $\dfrac{c_2(\eta,\gamma)}{c_1(\eta,\gamma)}$ is arbitrarily close to $1$.

\subsection{Quantisation of simple separable functions}
Many, if not all physically relevant Hamiltonians for one-dimensional systems can be written in the form $H= u_2(q) p^2 + u_1(q) p + u_0(q)$. Thus, it is useful to give the expressions of their quantum counterparts obtained by means of our method.  We first define a set of functions which helps to express these operators  in a simple way. 
\begin{defi}
Given $f_{j;m}(q)$ defined in \eqref{functionsfjm}, the set of periodic functions $B_{j}(q)$, $j=1,..,5$, is defined as
\begin{equation}
\begin{aligned}
B_{1}(q)=-\frac{1}{\kappa^{2} c_{1}\left(\eta,\gamma\right)}f_{0;3}(q)\, ,
\end{aligned}
\end{equation}
\begin{equation}
\begin{aligned}
B_{2}(q)=\ii\frac{2\lambda}{\kappa^{2} c_{1}\left(\eta,\gamma\right)}
f_{0;3}(q)\sin(\zeta-q)\, ,
\end{aligned}
\end{equation}
\begin{equation}
\begin{aligned}
B_{3}(q)=\frac{1}{\kappa^{2} c_{1}\left(\eta,\gamma\right)}\left[ 
-f_{2;3}(q)-3f_{1;4}(q)\cos(\gamma-q)
-3f_{0;5}(q)(\cos(\gamma-q))^{2}\right. \\
-f_{0;3}(q)+\lambda^{2}f_{0;3}(q)\left(\sin(\zeta-q)\right)^{2}+\ii2\lambda f_{1;3}(q)\sin(\zeta-q)\\
\left.-\ii\lambda f_{0;3}(q)\cos(\zeta-q) 
 +\ii\lambda 3f_{0;4}(q)\sin(\zeta-q)\cos(\gamma-q)\right]\, ,
\end{aligned}
\end{equation}

\begin{equation}
\label{equFprimeeta4}
B_4(q)=\frac{ c_{2}(\eta,\gamma)}{\kappa c_{1}(\eta,\gamma)}f_{0;2}(q)\, ,
\end{equation}

\begin{equation}
\label{equFprimeeta5}
B_5(q)=\frac{\lambda }{\kappa c_{1}(\eta,\gamma)}
\sin\left(\zeta-q\right) f_{0;2}(q)\, ,
\end{equation}

where the constants $\zeta$, $\lambda$, $\eta$ and $\gamma$ are given by Definition \ref{defics}.

\end{defi}

\subsubsection{quantisation of momentum squared}

For $f(q,p)=p^2$, which is the classical kinetic term up to a multiplicative constant, one has
\begin{equation}
\label{momentumsquare}
\left( A_{p^2}\psi\right) (\alpha)=\left[ b_1 \partial_{\alpha}^{2}
 +b_2 \partial_{\alpha}
 +b_3 \right]  \psi(\alpha)\,,
\end{equation}
where
\begin{equation}
b_j=\int_{\mathbb{S}^{1}}\ud q\, B_{j}(q)\, ,\quad j=1,2,3\, .
\end{equation}
\subsubsection{quantisation of other simple product functions}
For the functions $pu(q)$ and $p^{2}u(q)$ one obtains

\begin{equation}
\begin{aligned}
\label{eqApu}
\left(A_{pu(q)}\psi\right)(\alpha)=-\ii \left(u\ast B_4\right)(\alpha)\partial_{\alpha}\psi(\alpha)
- \left(u\ast B_5\right)(\alpha)\psi(\alpha)\, ,
\end{aligned}
\end{equation}

and 

\begin{equation}
\label{momentumsquareU}
\left( A_{p^2 u(q)}\psi\right) (\alpha)=\left[ \left(u\ast B_{1}\right)(\alpha) \partial_{\alpha}^{2}
 +\left(u\ast B_{2}\right)(\alpha)  \partial_{\alpha}
 +\left(u\ast B_{3}\right)(\alpha) \right]  \psi(\alpha)\,.
\end{equation}
\section{Computation of semi-classical portraits}\label{sec:lower}

Following (\ref{lower1111}), the classical portrait $\check{f}(p,q)$ of the operator $A_{f}$ is given by
\begin{equation}
\label{lower1}
\check{f}(p,q)=\bra{\eta_{p,q}}A_{f}\ket{\eta_{p,q}}=
\int_{\R\times\mathbb{S}^{1}}\frac{dp^{\prime}dq^{\prime}}{c_{\eta}}f\left(p^{\prime},q^{\prime}\right)
\left|\left\langle \eta_{p^{\prime},q^{\prime}}|\eta_{p,q}\right\rangle \right|^{2}\, .
\end{equation}
Using the functions from the definition \ref{defi1} we obtain for \eqref{lower1} the involved integral expression
\begin{equation}
\label{lower2}
\begin{aligned}
\check{f}(p,q)=\dfrac{1}{c_{\eta}}
\int_{\mathbb{S}^{1}}\ud q^{\prime}
\int_{\mathbb{S}^{1}}\ud \alpha\,
\overline{\eta(\alpha-q^{\prime})}\eta(\alpha-q)
\int_{\mathbb{S}^{1}}\ud \alpha^{\prime}
\eta(\alpha^{\prime}-q^{\prime})\overline{\eta(\alpha^{\prime}-q)}\\ \times\,
e^{\ii 2\kappa S_{\gamma}(\alpha^{\prime},\alpha,q) p}
e^{\ii 2\lambda S_{\zeta}(\alpha,\alpha^{\prime},q^{\prime}) }
e^{\ii 2\lambda S_{\zeta}(\alpha^{\prime},\alpha,q) }
\int_{-\infty}^{+\infty}\ud p^{\prime}\, e^{\ii 2\kappa S_{\gamma}(\alpha,\alpha^{\prime},q^{\prime}) p^{\prime}}\, f(p^{\prime},q^{\prime})\, .
\end{aligned}
\end{equation}
Two simple applications are examined below. 
\subsection{Lower symbol for a function of $q$}
In the case where $f(p,q)=u(q)$, after integrating with respect to $p^{\prime}$ and restricting to $\text{supp}\,\eta$, the expression (\ref{lower2}) becomes
\begin{equation}
\label{eql880}
\begin{aligned}
\check{u}(q)=\frac{2\pi}{\kappa c_{\eta}}\int_{\mathbb{S}^{1}}\ud q^{\prime}\int_{\mathbb{S}^{1}}\ud\alpha\,
\overline{\eta(\alpha-q^{\prime})}\eta(\alpha-q)
\int_{\mathbb{S}^{1}}\ud\alpha^{\prime}\,
\eta(\alpha^{\prime}-q^{\prime})\overline{\eta(\alpha^{\prime}-q)}\\
\times 
e^{\ii 2\kappa S_{\gamma}(\alpha^{\prime},\alpha,q) p}
e^{\ii 2\lambda S_{\zeta}(\alpha,\alpha^{\prime},q^{\prime}) }
e^{\ii 2\lambda S_{\zeta}(\alpha^{\prime},\alpha,q) }
\dfrac{u(q^{\prime})\delta\left(\alpha^{\prime}-\alpha\right)}
{\sin(\gamma-(\alpha-q^{\prime}))}\, .
\end{aligned}
\end{equation}
Integrating (\ref{eql880}) with respect to $\alpha^{\prime}$ gives
\begin{equation}
\check{u}(q)=\frac{2\pi}{\kappa c_{\eta}}\int_{\mathbb{S}^{1}}\ud\alpha\,\left|\eta(\alpha-q)
\right|^{2}\int_{\mathbb{S}^{1}}\ud q^{\prime}\,\dfrac{u(q^{\prime})\left|\eta(\alpha-q^{\prime})\right|^{2}}
{\sin(\gamma-(\alpha-q^{\prime}))}\, .
\end{equation}
Finally, using $E_{\eta;\gamma}(\alpha)$ from (\ref{AuQ}) it is easy to see that
\begin{equation}
\label{loweruofq}
\check{u}(q)=\int_{\mathbb{S}^{1}}\ud\alpha\,\left|\eta(\alpha-q)\right|^{2}\,
\left( E_{\eta;\gamma}\ast u\right) (\alpha)\, .
\end{equation}
The same result is obtained by calculating $\check{u}=\int_{\mathbb{S}^{1}}\ud\alpha\,\overline{\eta_{p,q}(\alpha)}\left(A_{u} \eta_{p,q}\right)(\alpha)$ with the expression \eqref{AuQ} for  the operator $A_u$. Considering the function $\widetilde{\eta}(\alpha)=\left|\eta(-\alpha)\right|^{2}$, the expression \eqref{loweruofq} can be written as the convolution
\begin{equation}
\label{loweruofqconv}
\check{u}(q)=\left[\widetilde{\eta}\ast\left( E_{\eta;\gamma}\ast u\right) \right](q)\, .
\end{equation}
\subsection{Lower symbol for the momentum $p$}
For $f(p,q)=p$, after integrating with respect to $p^{\prime}$  the expression (\ref{lower2}) becomes
\begin{equation}
\begin{aligned}
\check{p}=\frac{-\ii\pi}{\kappa c_{\eta}}\int_{\mathbb{S}^{1}}\ud q^{\prime}\int_{\mathbb{S}^{1}}\ud\alpha
\,\overline{\eta(\alpha-q^{\prime})}\eta(\alpha-q)
\int_{\mathbb{S}^{1}}\ud\alpha^{\prime}\,
\eta(\alpha^{\prime}-q^{\prime})\overline{\eta(\alpha^{\prime}-q)}\\
\times 
e^{\ii 2\kappa S_{\gamma}(\alpha^{\prime},\alpha,q) p}
e^{\ii 2\lambda S_{\zeta}(\alpha,\alpha^{\prime},q^{\prime}) }
e^{\ii 2\lambda S_{\zeta}(\alpha^{\prime},\alpha,q) }
\dfrac{\partial}{\partial S_{\gamma}(\alpha^{\prime},\alpha,q^{\prime})}
\delta\left(2\kappa S_{\gamma}(\alpha,\alpha^{\prime},q^{\prime})\right)\, .
\end{aligned}
\end{equation}
Assuming that $\eta$ is a real function and proceeding as in the previous case, one arrives at
\begin{equation}
\label{eq:semiclassp}
\begin{aligned}
\check{p}=\frac{c_{2}(\eta,\gamma)}{c_{1}(\eta,\gamma)}c_{-1}(\eta,\gamma)p+\lambda\frac{c_{2}(\eta,\gamma)}{\kappa c_{1}(\eta,\gamma)} c_{-1}(\eta,\zeta)-\lambda a\, .
\end{aligned}
\end{equation}
where the constant $a$ is given by \eqref{aprime}. The same result is obtained by calculating $\check{p}=\int_{\mathbb{S}^{1}}\ud\alpha\,\overline{\eta_{p,q}(\alpha)}\left(A_{p} \eta_{p,q}\right)(\alpha)$ with the expression \eqref{eq:quantp} for  the operator $A_p$.
\section{Angle operator }\label{sec:spectral}
\subsection{The angle operator: general properties}
Before presenting a key result of our paper, we need to define a $2\pi$-periodic function associated to the function $E_{\eta;\gamma}$ introduced in \eqref{Eal}. 
\begin{defi}
\label{perfunF}
Given $\gamma \in [0,\pi)$, the $2\pi$-periodic function $\mathcal{F}_{\eta;\gamma}$ is defined for $\alpha\in[0,2\pi)$ by
\begin{equation}
\mathcal{F}_{\eta;\gamma}(\alpha)=
2\pi\times\left\lbrace
\begin{array}{ccc}
\int_{\alpha}^{\gamma}\ud q\, E_{\eta;\gamma}(q)\,,
& &  0\leqslant\alpha<\gamma
\\ \\
0 \,,& & \gamma\leqslant\alpha\leqslant\pi+\gamma
\\ \\
-\int^{\alpha-2\pi}_{\gamma-\pi}\ud q\, E_{\eta;\gamma}(q)= -1 +  \int^{\gamma}_{\alpha-2\pi}\ud q\, E_{\eta;\gamma}(q)\,,
& & \pi+\gamma<\alpha<2\pi
\end{array}
\right.\, .
\end{equation}
\end{defi} 
The \textit{second fundamental theorem of calculus} provides the following  property of the function $\mathcal{F}_{\eta;\gamma}$.
\begin{prop}
Suppose that the function $E_{\eta;\gamma}(\alpha)$ is continuous.  Then the function $\mathcal{F}_{\eta;\gamma}$ is  piecewise continuous and differentiable for $\alpha \neq 2k\pi$. For $\alpha \in [\gamma-\pi,\gamma]\setminus\{0\}$, in  the period interval $[-\pi,\pi)$, its derivative is given by
 \begin{equation}
\label{derFalp}
\frac{\ud}{\ud \alpha} \mathcal{F}_{\eta;\gamma}(\alpha)= - 2\pi 
E_{\eta;\gamma}(\alpha)
\, .
\end{equation}
At the discontinuity point $\alpha = 0$, the jump of $\mathcal{F}_{\eta;\gamma}$ is $2\pi$, and its  derivative is $2\pi \delta(0)$. 
\end{prop}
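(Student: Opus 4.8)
The plan is to verify directly from Definition \ref{perfunF} the three claimed assertions: piecewise continuity and differentiability away from $\alpha = 2k\pi$, the value of the derivative on the support interval, and the size of the jump at $\alpha = 0$ together with the corresponding distributional derivative. Everything reduces to applying the second fundamental theorem of calculus to the two antiderivative-type branches appearing in $\mathcal{F}_{\eta;\gamma}$, together with a careful bookkeeping of the endpoints of the branches.

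First I would restrict attention to a single period interval, taking $[-\pi,\pi)$ as in the statement (equivalently, working on $[0,2\pi)$ and then translating). On the open subinterval where $0 \leqslant \alpha < \gamma$, the function is $\alpha \mapsto 2\pi\int_\alpha^\gamma E_{\eta;\gamma}(q)\,\ud q$; since $E_{\eta;\gamma}$ is assumed continuous, this is continuously differentiable with derivative $-2\pi E_{\eta;\gamma}(\alpha)$ by the second fundamental theorem of calculus. On $\gamma \leqslant \alpha \leqslant \pi+\gamma$ the function is identically $0$, hence smooth with zero derivative; note that by Theorem \ref{theorem1} the support of $\eta$, and hence of $E_{\eta;\gamma}$, lies in $(\gamma-\pi,\gamma)$, so on this branch $-2\pi E_{\eta;\gamma}(\alpha) = 0$ as well, and the two expressions for the derivative agree. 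On the branch $\pi+\gamma < \alpha < 2\pi$, writing $\mathcal{F}_{\eta;\gamma}(\alpha) = -2\pi\int_{\gamma-\pi}^{\alpha-2\pi} E_{\eta;\gamma}(q)\,\ud q$, differentiation again gives $-2\pi E_{\eta;\gamma}(\alpha-2\pi) = -2\pi E_{\eta;\gamma}(\alpha)$ by $2\pi$-periodicity of $E_{\eta;\gamma}$. Translating this last branch into the period interval $[-\pi,\pi)$ it becomes the subinterval to the left of $0$, i.e. $(\gamma-\pi,0)$, so collecting the branches one sees that on $[\gamma-\pi,\gamma]\setminus\{0\}$ one has $\frac{\ud}{\ud\alpha}\mathcal{F}_{\eta;\gamma}(\alpha) = -2\pi E_{\eta;\gamma}(\alpha)$, which is \eqref{derFalp}; off $[\gamma-\pi,\gamma]$ both sides vanish.

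Next I would check continuity across the internal junctions and isolate the genuine discontinuity. At $\alpha = \gamma$, the first branch tends to $2\pi\int_\gamma^\gamma E_{\eta;\gamma} = 0$, matching the middle branch, so $\mathcal{F}_{\eta;\gamma}$ is continuous there; likewise at $\alpha = \pi+\gamma$ the third branch gives $-2\pi\int_{\gamma-\pi}^{\gamma-\pi} E_{\eta;\gamma} = 0$, matching the middle branch. The only jump is at $\alpha = 0$ (equivalently $\alpha \equiv 0 \bmod 2\pi$): from the right, $\mathcal{F}_{\eta;\gamma}(0^+) = 2\pi\int_0^\gamma E_{\eta;\gamma}(q)\,\ud q$, while from the left, using the third branch as $\alpha \to 2\pi^-$, $\mathcal{F}_{\eta;\gamma}(2\pi^-) = -2\pi\int_{\gamma-\pi}^{\gamma} E_{\eta;\gamma}(q)\,\ud q = -2\pi$ by the normalisation \eqref{normEq}. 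Hence the jump $\mathcal{F}_{\eta;\gamma}(0^+) - \mathcal{F}_{\eta;\gamma}(0^-)$ equals $2\pi\int_0^\gamma E_{\eta;\gamma} - \bigl(-2\pi + 2\pi\int_{?}\bigr)$; using the alternative form in Definition \ref{perfunF}, the left limit is $-1 + \int_0^\gamma E_{\eta;\gamma}$ times $2\pi$, so the difference of the two one-sided limits is exactly $2\pi$. The distributional derivative of a piecewise-$C^1$ function equals its classical derivative plus the sum of (jump)$\times\delta$ at the jump points, so the derivative of $\mathcal{F}_{\eta;\gamma}$ acquires a term $2\pi\,\delta(\alpha)$ at $\alpha = 0$; writing this informally as $2\pi\delta(0)$ as in the statement completes the argument.

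The main obstacle, such as it is, is purely organisational: keeping straight the relation between the two period-interval conventions ($[0,2\pi)$ used in Definition \ref{perfunF} versus $[-\pi,\pi)$ used in the statement), and making sure the orientation and endpoints of the integrals in the two branches are handled consistently when verifying that the jump is $+2\pi$ rather than $-2\pi$. There is no analytic difficulty: continuity of $E_{\eta;\gamma}$ and the normalisation \eqref{normEq} are exactly the two inputs needed, and everything else is the fundamental theorem of calculus plus the elementary fact about distributional derivatives of piecewise smooth functions.
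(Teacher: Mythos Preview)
Your proof is correct and follows exactly the approach the paper itself indicates: the paper offers no proof beyond the single sentence invoking the second fundamental theorem of calculus, and your argument is precisely the detailed verification of that claim on each branch of Definition~\ref{perfunF}. One small slip: when computing the left limit you first write $\mathcal{F}_{\eta;\gamma}(2\pi^-) = -2\pi\int_{\gamma-\pi}^{\gamma} E_{\eta;\gamma}$, but the upper limit should be $0$ (since $\alpha-2\pi\to 0^-$), not $\gamma$; you then correctly recover the right value via the alternative form $2\pi\bigl(-1+\int_0^\gamma E_{\eta;\gamma}\bigr)$, so the conclusion is unaffected.
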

\begin{theo}
For the discontinuous $2\pi$-periodic angle function $\Da(\alpha)$ defined by
\begin{equation}
\label{anglefunction}
\Da(\alpha) = \alpha \quad \mbox{for} \quad \alpha\in[0,2\pi)\, ,
\end{equation}
the angle operator is the bounded self-adjoint multiplication operator
\begin{equation}
\label{eq:angopalpha}
\left(A_{\Da}\psi\right)(\alpha)=\left(E_{\eta;\gamma}\ast\Da\right)(\alpha)\psi(\alpha)\, ,
\end{equation} 
where the convolution $E_{\eta;\gamma}\ast\Da$  is given by the $2\pi$-periodic bounded continuous function
\begin{equation}
\label{analyticsoectrum}
\left(E_{\eta;\gamma}\ast \Da\right) (\alpha)=\Da(\alpha) +\mathcal{F}_{\eta;\gamma}(\alpha) - 
\int^{\gamma}_{\gamma -\pi}\ud q\, q\, E_{\eta;\gamma}(q)
\, .
\end{equation}
We note that the lower $2\pi$  jumps of the angle function $\Da$ at $\alpha= 2k\pi$ are exactly canceled by the upper jumps of the function $\mathcal{F}_{\eta;\gamma}$ at the same points. 
\end{theo}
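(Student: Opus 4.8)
The plan is to combine the previously established multiplication-operator formula \eqref{AuQ} with a careful bookkeeping of the jumps of the discontinuous angle function $\Da$ and the auxiliary function $\mathcal{F}_{\eta;\gamma}$. Since $\Da(\alpha)=\alpha$ on $[0,2\pi)$ is bounded on a period interval, Proposition \ref{bouncont} immediately gives that $A_{\Da}$ is the multiplication operator by the bounded continuous $2\pi$-periodic function $E_{\eta;\gamma}\ast\Da$; boundedness and continuity of the symbol then yield that $A_{\Da}$ is a bounded self-adjoint operator (it is symmetric by construction since $\Da$ is real, and a bounded symmetric operator is self-adjoint). So the only substantive content to prove is the explicit formula \eqref{analyticsoectrum} for the convolution.

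To get \eqref{analyticsoectrum}, I would work in the period interval $[-\pi,\pi)$ where $E_{\eta;\gamma}$ is supported in $(\gamma-\pi,\gamma)$. Write $(E_{\eta;\gamma}\ast\Da)(\alpha)=\int_0^{2\pi}\ud q\, E_{\eta;\gamma}(\alpha-q)\,\Da(q)$ and substitute $q\mapsto \alpha-q$, so that the integral becomes $\int E_{\eta;\gamma}(q)\,\Da(\alpha-q)\,\ud q$ over one period, with $q$ ranging over $\text{supp}\,\eta\subset(\gamma-\pi,\gamma)$. The key point is that $\Da(\alpha-q)$ equals $\alpha-q$ plus a correction of $+2\pi$ whenever $\alpha-q<0$, i.e. whenever $q>\alpha$ (and $\alpha\ge 0$), and equals $\alpha-q$ otherwise. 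Splitting the integral at $q=\alpha$ and using $\int_{\text{supp}\,\eta} E_{\eta;\gamma}=1$ produces the three terms: $\alpha$ (from $\int E_{\eta;\gamma}(q)\,\ud q\cdot\alpha$), the constant $-\int_{\gamma-\pi}^{\gamma}\ud q\, q\, E_{\eta;\gamma}(q)$ (from $-\int q\,E_{\eta;\gamma}(q)\,\ud q$), and the piecewise term $2\pi\int_{\alpha}^{\gamma}E_{\eta;\gamma}(q)\,\ud q$ for $0\le\alpha<\gamma$, which is exactly $\mathcal{F}_{\eta;\gamma}(\alpha)$ by Definition \ref{perfunF}. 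I would then check the remaining ranges of $\alpha$ (namely $\gamma\le\alpha$ in the relevant period interval, and the wrap-around region $\pi+\gamma<\alpha<2\pi$ where $\Da(\alpha)$ itself jumps) and verify that the piecewise expression for $\mathcal{F}_{\eta;\gamma}$ in those ranges correctly accounts for the shift, so that the identity \eqref{analyticsoectrum} holds on all of $\mathbb{S}^1$.

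The final remark — that the downward $2\pi$ jumps of $\Da$ at $\alpha=2k\pi$ are cancelled by the upward jumps of $\mathcal{F}_{\eta;\gamma}$ at the same points — follows from comparing the one-sided limits: as $\alpha\uparrow 2\pi$, $\Da(\alpha)\to 2\pi$ while $\mathcal{F}_{\eta;\gamma}(\alpha)\to -1+\int_\alpha^\gamma\ldots\to -1$ rescaled, i.e. $\mathcal{F}_{\eta;\gamma}$ contributes $-2\pi$ in the limit (using $\mathcal{F}$'s jump of $2\pi$ at $0$ established in the preceding proposition), and as $\alpha\downarrow 0$ both vanish, so the sum $\Da+\mathcal{F}_{\eta;\gamma}$ is continuous across $\alpha=2k\pi$; this matches the already-proven continuity of $E_{\eta;\gamma}\ast\Da$ and serves as a useful consistency check rather than requiring separate argument.

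The main obstacle I anticipate is purely organisational: keeping the interval arithmetic and the $\mathrm{mod}\,2\pi$ reductions consistent across the three cases in Definition \ref{perfunF}, especially handling the wrap-around region $\pi+\gamma<\alpha<2\pi$ where \emph{both} $\Da$ and the convolution kernel's argument can cross a multiple of $2\pi$. There is no analytic difficulty — everything reduces to the additivity of the integral and the defining relation $\int E_{\eta;\gamma}=1$ — but care is needed to confirm that the stated closed form genuinely reproduces the continuous periodic function on every sub-interval, which is why cross-checking against Proposition on the jump of $\mathcal{F}_{\eta;\gamma}$ and against the known continuity from Proposition \ref{bouncont} is the right way to validate the computation.
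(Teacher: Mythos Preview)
Your approach is essentially the same as the paper's: both reduce to direct evaluation of the convolution integral via the change of variables $q\mapsto\alpha-q$ and a case split on $\alpha$ according to where $\Da(\alpha-q)$ picks up a $2\pi$ correction, then identify the piecewise result with the three pieces of Definition~\ref{perfunF}; the paper just performs the case split on $\alpha\in[0,\gamma)$, $[\gamma,\pi+\gamma]$, $(\pi+\gamma,2\pi)$ \emph{before} changing variables, while you change variables first, which is an inconsequential reordering. One small slip: in your consistency-check paragraph you claim that as $\alpha\downarrow 0$ both $\Da$ and $\mathcal{F}_{\eta;\gamma}$ vanish, but $\mathcal{F}_{\eta;\gamma}(0^+)=2\pi\int_0^{\gamma}E_{\eta;\gamma}(q)\,\ud q$ is generally nonzero (and likewise the limit from below is $2\pi(-1+\int_0^{\gamma}E_{\eta;\gamma})$, not $-2\pi$); the \emph{jump} is still $2\pi$ and cancels that of $\Da$, so your conclusion is right even though the stated one-sided limits are not --- and, as you correctly note, this remark is redundant with Proposition~\ref{bouncont} anyway.
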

\begin{proof}
Taking into account the conditions on  $\text{supp}\,\eta$ , the convolution $E_{\eta;\gamma}\ast\Da$ becomes
\begin{equation}
\label{angle1op1}
\left(E_{\eta;\gamma}\ast \Da\right) (\alpha)=   \int_{\alpha-\gamma}^{\alpha-\gamma +\pi}\ud q\, E_{\eta;\gamma}(\alpha-q)\,\Da (q) \, .
\end{equation}
The convolution \eqref{angle1op1} restricted to the interval $[0,\gamma)$ is given by
\begin{equation}
\begin{aligned}
\left. \left(E_{\eta;\gamma}\ast \Da\right) \right|_{[0,\gamma)} (\alpha) = 
\int_{\alpha-\gamma}^{0}\ud q\,(q+2\pi)\, E_{\eta;\gamma}(\alpha-q)\\
+ 
\int_{0}^{\alpha-\gamma +\pi}\ud q\, q\,E_{\eta;\gamma}(\alpha-q)\, .
\end{aligned}
\end{equation}
After the change of variables $q\rightarrow\alpha-q$, one has
\begin{equation}
\label{inter1}
\begin{aligned}
\left. \left(E_{\eta;\gamma}\ast \Da\right) \right|_{[0,\gamma)}  (\alpha)=  
\alpha-\int^{\gamma}_{\gamma -\pi}\ud q\, q\, E_{\eta;\gamma}(q)
+ 2\pi\int^{\gamma}_{\alpha}\ud q\, E_{\eta;\gamma}(q)\, .
\end{aligned}
\end{equation}
For the other intervals one has
\begin{equation}
\label{inter2}
\begin{aligned}
\left. \left(E_{\eta;\gamma}\ast \Da\right) \right|_{[\gamma,\pi+\gamma]} (\alpha)=  
\alpha-\int^{\gamma}_{\gamma -\pi}\ud q\, E_{\eta;\gamma}(q)\,q \, ,
\end{aligned}
\end{equation}
\begin{equation}
\label{inter3}
\begin{aligned}
\left. \left(E_{\eta;\gamma}\ast \Da\right) \right|_{(\pi+\gamma,2\pi)} (\alpha)=  
\alpha-\int^{\gamma}_{\gamma-\pi}\ud q\, q\, E_{\eta;\gamma}(q)
- 2\pi\int^{\alpha-2\pi}_{\gamma-\pi}\ud q\, E_{\eta;\gamma}(q)\, .
\end{aligned}
\end{equation}
Combining the equations \eqref{inter1}, \eqref{inter2}, \eqref{inter3}, \eqref{averEf}, and the definition \ref{perfunF}, we arrive at the expression \eqref{analyticsoectrum} of $A_{\Da}$ as a multiplication operator on $L^2(\mathbb{S}^1, \ud\alpha)$. Since the angle function is bounded, then from Prop. \ref{bouncont} the convolution function $E_{\eta;\gamma}\ast\Da$ is continuous and bounded. Since the latter is also real,  the corresponding multiplication operator is bounded self-adjoint. 
\end{proof}

\subsection{Angle operator: analytic and numerical results}
A specific section $\sigma$ is now used (which implies a choice of $\bm{\kappa},\bm{\lambda}\in\R^{2}$). To simplify, we put $\lambda=0$, $\gamma=\pi/2$ and select a real, even, smooth  $\eta(\alpha)$. Hence, its support is a subset of the right half-circle  $-\dfrac{\pi}{2} < \alpha < \dfrac{\pi}{2}$. We already proved that the spectrum of the angle operator $A_{\Da}$ is purely continuous and given by the range of the function 
\begin{equation}
\label{speclambdaonehalf}
\left(E_{\eta;\frac{\pi}{2}}\ast \Da\right) (\alpha)=\Da(\alpha)-\int^{\frac{\pi}{2}}_{-\frac{\pi}{2}}\ud q\, q\, E_{\eta;\frac{\pi}{2}}(q)
+\mathcal{F}_{\eta;\frac{\pi}{2}}(\alpha)\, .
\end{equation}
In order to study the relation between the localisation of $\eta$ and the spectrum of $A_{\Da}$, we pick the  familiar smooth and compactly supported  test functions for distributions, namely,
\begin{equation}
\label{testeps}
\omega_{\epsilon}(x) = \left\lbrace \begin{array}{cc}
  \exp\left(-\dfrac{\epsilon}{1-x^2}\right)    & 0\leq\vert x\vert < 1\, ,    \\
   0   &   \vert x\vert \geq 1\,, 
\end{array}\right.
\end{equation}
where the parameter $\epsilon \geq 0$ determines the rate of decrease of $\omega_{\epsilon}$. We also note that $0\leq \omega_{\epsilon}(x)\leq e^{-\epsilon}$. 
Now we choose as fiducial vectors the family of $2\pi$-periodic  smooth even functions which have  support $[-\delta,\delta]\subset(-\pi/2,\pi/2)$ and which are  parametrized by $\epsilon>0$ and $0<\delta<\pi/2$,
\begin{equation}
\label{eq:familydelta}
\eta(\alpha)\equiv \eta^{(\epsilon,\delta)}(\alpha)= \frac{1}{\sqrt{\delta e_{2\epsilon}}}\, \omega_{\epsilon}\left(\frac{\alpha}{\delta}\right)\, . 
\end{equation}
The normalisation factor involves the integral 
\begin{equation}
\label{eeps}
e_{\epsilon}:= \int_{-1}^{1}\ud  x \, \omega_{\epsilon}(x)\, . 
\end{equation}
As a function of $\epsilon\in [0, +\infty)$, $e_{\epsilon}$ decreases uniformly from $2$ to $0$. With these definitions, the integrals $c_{\nu}\left(\eta,\frac{\pi}{2}\right)$ defined by \eqref{cetanu} assume the simple form
\begin{equation}
\label{vetanuom}
c_{\nu}\left(\eta^{(\epsilon,\delta)},\frac{\pi}{2}\right)= \frac{1}{e_{2\epsilon}}\int_{-1}^1 \ud x\, \frac{\omega_{2\epsilon}(x)}{(\cos \delta x)^{\nu}}\equiv \frac{e_{2\epsilon}(\delta, \nu)}{e_{2\epsilon}}\,,  
\end{equation}
where
\begin{equation}
\label{eepsdelnu}
e_{\epsilon}(\delta, \nu) := \int_{-1}^1 \ud x\, \frac{\omega_{\epsilon}(x)}{(\cos \delta x)^{\nu}}\, , \quad 0\leq \delta < \frac{\pi}{2}\, . 
\end{equation}
Graphs of the function $ \eta^{(\epsilon,\delta)}(\alpha)$ for a few values of parameters $\delta$ and $\epsilon$  are  shown in Figures \ref{fig:etafunction1} and \ref{fig:etafunction2}. They give an idea of its localisation properties.   

With the above notations 
\begin{equation}
\label{E1}
E_{\eta;\frac{\pi}{2}}(\alpha)= E_{\eta^{(\epsilon,\delta)};\frac{\pi}{2}}(\alpha)= \dfrac{1}{\delta e_{2\epsilon}(\delta,1)}\dfrac{\omega_{2\epsilon}\left(\dfrac{\alpha}{\delta}\right)}{\cos\alpha} \, .
\end{equation}
\begin{figure}
\centering
\begin{subfigure}[b]{0.4\textwidth}
\includegraphics[width=\textwidth]{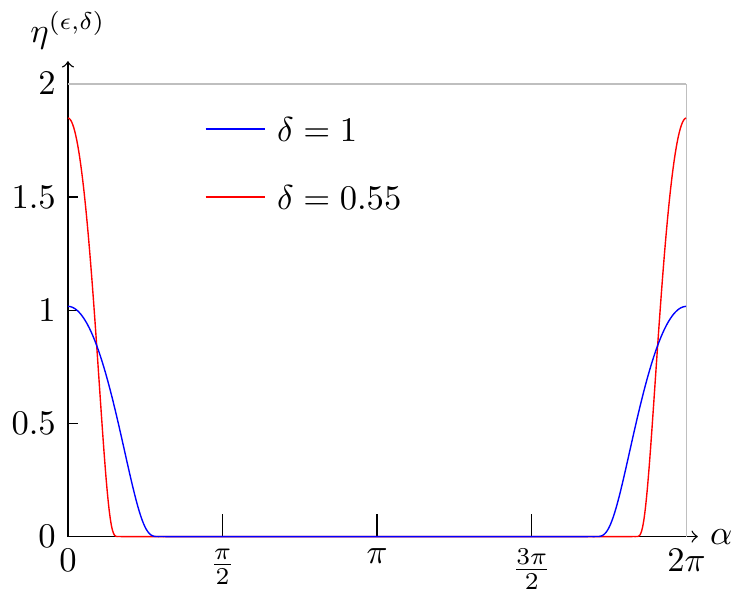}
\caption{$\epsilon=1$}
\label{fig:etafunction1}
\end{subfigure}
\begin{subfigure}[b]{0.4\textwidth}
\includegraphics[width=\textwidth]{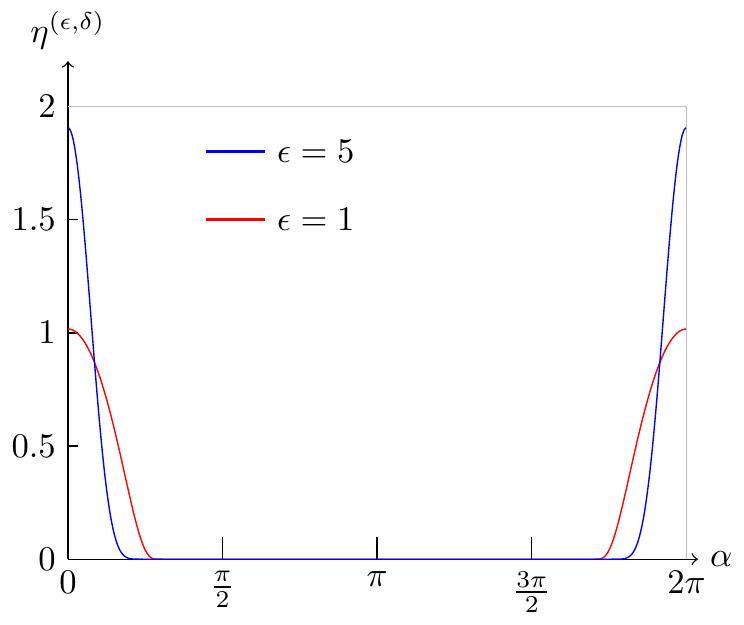}
\caption{$\delta=1$}
\label{fig:etafunction2}
\end{subfigure}
\caption{The parameters $\epsilon$ and $\delta$ control the localisation about $\alpha=0 \, \mathrm{mod}\,2\pi$ of the fiducial function (i.e., ``wavelet") $\eta^{(\epsilon,\delta)}(\alpha)$. In the figure \ref{fig:etafunction1} the localisation improves as $\delta$ becomes smaller. In the figure \ref{fig:etafunction2} the localisation improves as $\epsilon$ becomes larger.}\label{fig:etafunction}
\end{figure}
Taking into account the dependence of $\text{supp}\,\eta$ on $\delta$,
the convolution \eqref{speclambdaonehalf} for $\alpha\in[-\pi,\pi)$ is given by
\begin{equation}
\label{eq:spct}
\left( E_{\eta^{(\epsilon,\delta)};\frac{\pi}{2}}\ast\Da\right)(\alpha) = \alpha + 2\pi\times
\left\lbrace
\begin{array}{ccc}
\int_{\alpha}^{\delta}\ud x\, E_{\eta^{(\epsilon,\delta)};\frac{\pi}{2}}(x)& & 0\leqslant\alpha <\delta\\
0 & & \delta\leqslant\alpha\leqslant 2\pi-\delta\\
-\int_{-\delta}^{\alpha-2\pi}\ud x\, E_{\eta^{(\epsilon,\delta)};\frac{\pi}{2}}(x)& & 2\pi-\delta<\alpha< 2\pi\\
\end{array}
\right. \, ,
\end{equation}
We note that $\int^{\frac{\pi}{2}}_{-\frac{\pi}{2}}\ud q\, q\, E_{\eta;\frac{\pi}{2}}(q)$ vanishes due to the even parity of the fiducial function. 
\begin{figure}
    \centering
    \begin{subfigure}[b]{0.6\textwidth}
		\includegraphics[width=\textwidth]{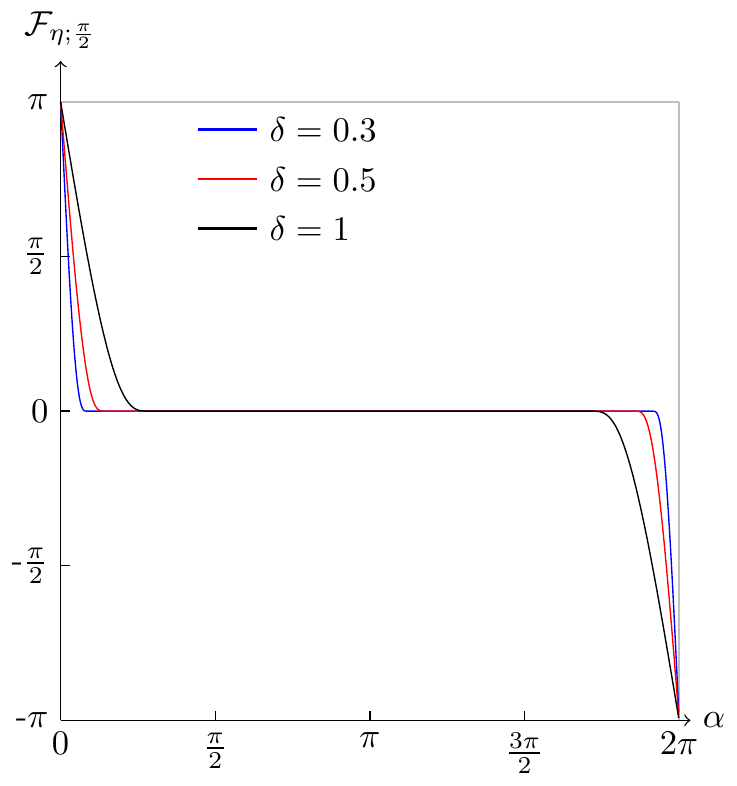}
		\caption{Function $\mathcal{F}_{\eta^{(\epsilon,\delta)};\frac{\pi}{2}}(\alpha)$ for $\epsilon=1$.}
		\label{fig:functionF}
    \end{subfigure} 
    \begin{subfigure}[b]{0.6\textwidth}
        \includegraphics[width=\textwidth]{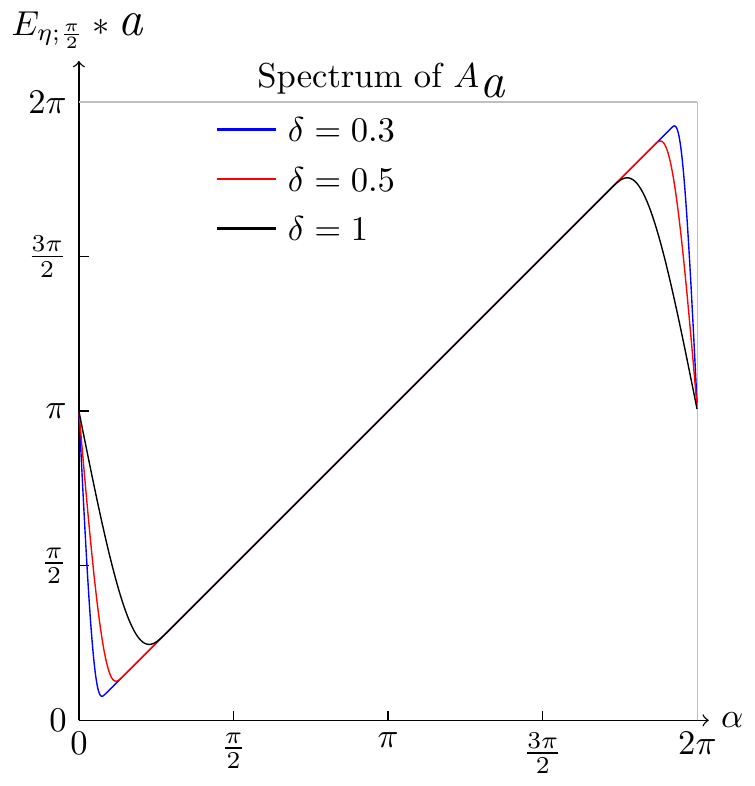}
        \caption{Spectrum of $A_{\Da}$ for $\epsilon=1$.}
        \label{fig:angleopdelta1}
    \end{subfigure}
   \caption{The continuity of the convolution $\left( E_{\eta^{(\epsilon,\delta)};\frac{\pi}{2}}\ast\Da\right)(\alpha)$ is a result of the regularization of $\Da(\alpha)$ by $\mathcal{F}_{\eta^{(\epsilon,\delta)};\frac{\pi}{2}}(\alpha)$. }\label{fig:angleopdelta}
\end{figure}
The expression (\ref{eq:spct}) above is a periodic bounded function, and we see that it coincides with the angle function $\Da$ inside $[\delta,2\pi-\delta]$. Outside the interval $[\delta,2\pi-\delta]$ the function $\mathcal{F}_{\eta^{(\epsilon,\delta)};\frac{\pi}{2}}(\alpha)$ regularizes $\Da$, therefore the convolution (\ref{eq:spct}) becomes a continuous function. The behaviour of the quantum angle viewed as  the multiplication operator \eqref{eq:spct}, for different values of $\delta$,  is depicted in the figure \ref{fig:angleopdelta}. As $\delta$ approaches zero (or $\epsilon$ becomes very large), the convolution $E_{\eta_{\delta};\frac{\pi}{2}}\ast\Da$ can be made arbitrarily close to the angle function $\Da$.
We notice from the figure that the spectrum $\sigma\left(A_{\Da}\right)$ of the angle operator is continuous, as expected from the smoothness of the convolution, and is, for a given $\delta$ and $\epsilon$, a closed interval strictly included in the interval $[0,2\pi]$, i.e.,
\begin{equation}
\label{Aspectrum}
\sigma\left(A_{\Da}\right) = [\pi -m(\epsilon,\delta), \pi + m(\epsilon,\delta)]\, , \quad 0< m(\epsilon,\delta)< \pi\, , 
\end{equation} 
with $m(\epsilon,\delta)\to \pi$ as $\delta \to 0$ or $\epsilon\to\infty$, i.e., the spectrum goes to $[0,2\pi)$. For a fixed value of $\epsilon$, the real number $\pi -m(\epsilon,\delta)$ corresponds to the positive root $\alpha\in(0,\delta)$ of the equation
\begin{equation}
\omega_{2\epsilon}\left(\frac{\alpha}{\delta}\right)=\frac{\delta e_{2\epsilon}(\delta,1)}{2\pi }\cos\alpha\, .
\end{equation}

It is interesting to compare the function \eqref{eq:spct} with the semiclassical portrait of $A_{\Da}$. Considering the function  $\widetilde{\eta}^{(\epsilon,\delta)}(\alpha)=\eta^{(\epsilon,\delta)}(-\alpha)$, using the expression \eqref{loweruofqconv} the semiclassical portrait of $A_{\Da}$ can be written as
\begin{equation}
\check{q}(q)=
\left[\left(\widetilde{\eta}^{(\epsilon,\delta)}\right)^2\ast\left( E_{\eta^{(\epsilon,\delta)};\frac{\pi}{2}}\ast\Da\right)\right](q) .
\end{equation}
Taking into account the support of $\eta^{(\epsilon,\delta)}$ one has explicitly
\begin{equation}
\label{eq:semiclassangle}
\check{q}=
\left\lbrace 
\begin{array}{ccc}
\int_{q+2\pi-\delta}^{2\pi}\text{d}\alpha\,\left(\eta^{(\epsilon,\delta)}(\alpha-q-2\pi)\right)^2\,\left(E_{\eta^{(\epsilon,\delta)};\frac{\pi}{2}}\ast \Da\right)(\alpha)
&&\\
+\int_{0}^{q+\delta}\text{d}\alpha\,\left(\eta^{(\epsilon,\delta)}(\alpha-q)\right)^2\,\left(E_{\eta^{(\epsilon,\delta)};\frac{\pi}{2}}\ast \Da\right)(\alpha)
&& 0\leq q<\delta\\
\quad&&\\
\int_{q-\delta}^{q+\delta}\text{d}\alpha\,\left(\eta^{(\epsilon,\delta)}(\alpha-q)\right)^2\,\left(E_{\eta^{(\epsilon,\delta)};\frac{\pi}{2}}\ast \Da\right)(\alpha)&&\delta\leq q<2\pi-\delta\\
\quad&&\\
\int_{0}^{q-2\pi+\delta}\text{d}\alpha\,\left(\eta^{(\epsilon,\delta)}(\alpha-q+2\pi)\right)^2\,\left(E_{\eta^{(\epsilon,\delta)};\frac{\pi}{2}}\ast \Da\right)(\alpha)&&\\
+\int_{q-\delta}^{2\pi}\text{d}\alpha\,\left(\eta^{(\epsilon,\delta)}(\alpha-q)\right)^2\,\left(E_{\eta^{(\epsilon,\delta)};\frac{\pi}{2}}\ast \Da\right)(\alpha)&&2\pi-\delta\leq q<2\pi
\end{array}
\right. 
\end{equation}
The behaviour of $\check{q}$ is depicted in the figure \ref{fig:lowersymbol} in order to be compared with $E_{\eta^{(\epsilon,\delta)};\frac{\pi}{2}}\ast\Da$.
\begin{figure}
    \centering
    \begin{subfigure}[b]{0.6\textwidth}
        \includegraphics[width=\textwidth]{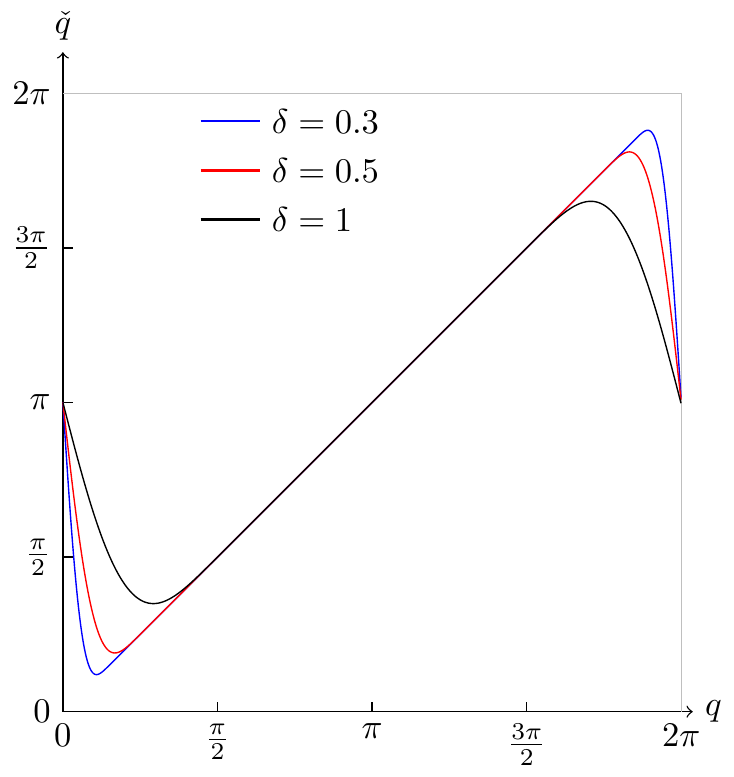}
        \caption{$\epsilon=1$}
        \label{fig:lowerdelta}
    \end{subfigure}
    \begin{subfigure}[b]{0.6\textwidth}
        \includegraphics[width=\textwidth]{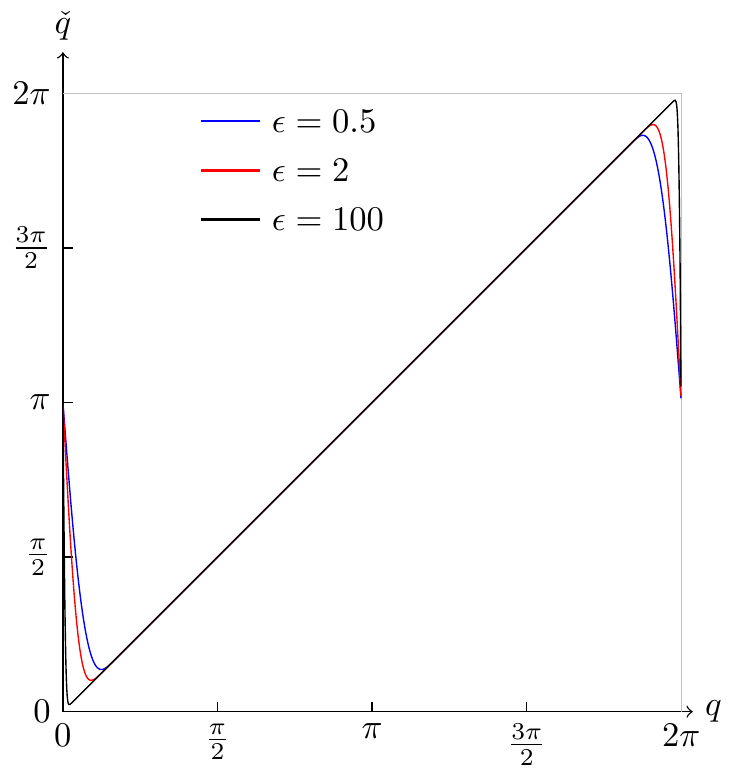}
        \caption{$\delta=0.3$}
        \label{fig:lowereps}
    \end{subfigure}
   \caption{For the lower symbol $\check{q}(q)$ of the angle operator $A_{\Da}$, we check that    $\check{q}(q)\to\Da(q)$ as $\delta \to 0$ or $\epsilon\to\infty$.}\label{fig:lowersymbol}
\end{figure}
%


\section{Angle-Momentum: commutation and inequality}
\label{heisenberg}
\subsection{Commutation relation}
For $\lambda=0$ and $\psi(\alpha)\in L^{2}(\mathbb{S}^1,\text{d}\alpha)$, using \eqref{eq:quantp} we find the following (non-canonical) commutation rule between the angle operator and the momentum operator,
\begin{equation}
\label{commpq}
\left([A_p, A_{\Da}]\psi\right)(\alpha)= -\ii \mathtt{c}\left(\frac{\ud}{\ud\alpha}\left(E_{\eta;\gamma}\ast\Da\right)(\alpha) \right) \psi(\alpha)\,, \quad  \mathtt{c}:=\frac{c_{2}(\eta,\gamma)}{ \kappa c_{1}(\eta,\gamma)}  .
\end{equation}
Considering Eq. \eqref{anglefunction} and applying Eq. \eqref{derFalp} we arrive at the following result,
\begin{equation}
\label{commpq2}
\left([A_p, A_{\Da}]\psi\right)(\alpha)= \left\lbrace\begin{array}{cc}-\ii \mathtt{c}\left(1-2\pi E_{\eta;\gamma}(\alpha) \right)\psi(\alpha)& \mbox{for}\quad \alpha\neq 2k\pi\, ,\\
  0 & \mbox{for}\quad \alpha =  2k\pi \end{array}\right.\,. 
\end{equation}

Note that the commutator function $\left([A_p, A_{\Da}]\psi\right)(\alpha)$ is discontinuous at $\alpha=2k\pi$. In order to give $A_p$ its usual form $-\ii\partial/\partial\alpha$, a suitable choice on $\kappa$ can be made. We recall that it is always possible to choose $\eta$ such that $\mathtt{c}$ is equal to $1$, as it was mentioned above \eqref{kappachoice}. Using appropriate localisation conditions on the fiducial vector $\eta$, it is also possible to recover the classical canonical commutation rule except for its value at the origin $\text{mod}\,2\pi$.

There are two possible choices:
\begin{itemize}
\item First choice, $\kappa=\frac{c_{2}(\eta,\frac{\pi}{2})}{c_{1}(\eta,\frac{\pi}{2})}$.
\item Second choice, $\kappa=1$. In this case one applies a limit condition on the expression \eqref{eq:quantp}, 
\begin{equation}
\label{quotientc1c2}
\lim_{\delta\to 0}\frac{c_{2}(\eta^{(\epsilon,\delta)},\frac{\pi}{2})}{c_{1}(\eta^{(\epsilon,\delta)},\frac{\pi}{2})}=1\, .
\end{equation}
\end{itemize}
With the choice \eqref{eq:familydelta} for the fiducial vector, the expression of the commutator operator becomes
\begin{equation}
\label{amcr2}
\left([A_p, A_{\Da}]\psi\right)(\alpha)= \left\lbrace\begin{array}{cc}-\ii \mathtt{c} \left(1-2\pi E_{\eta^{(\epsilon,\delta)};\frac{\pi}{2}}(\alpha)\right)\psi(\alpha)&\mbox{for}\quad \alpha\neq 2k\pi\, ,\\
0 &\mbox{for}\quad \alpha =  2k\pi \end{array}\ \right. \, .
\end{equation}
Selecting one of the two possible choices on $\kappa$ and applying the limit $\delta\to 0$ on the expression \eqref{amcr2}, gives a commutation rule similar to \eqref{comrel1}, with the appearance of a singularity at the origin $\text{mod}\,2\pi$.

\subsection{Heisenberg inequality}
Let us now consider the Heisenberg inequality concerning   the operators angle and angular momentum,
\begin{equation}
\label{deltadeltaphi}
\Delta A_p \,\Delta A_{\Da} \geqslant\frac{1}{2}\left|\bra{\phi}[A_p, A_{\Da}]\ket{\phi} \right|\, ,
\end{equation}
where $\phi\in L^{2}(\mathbb{S}^1,\text{d}\alpha)$ and $\left(\Delta A\right)^2  = \lg\phi|A^2|\phi\rg - \lg\phi|A|\phi\rg^2$. 
\subsubsection{With coherent states}
As discussed in the introduction, one of the main issues regarding the definition of an acceptable angle operator concerns the quantum angular dispersion versus the quantum angular momentum one.  
The Heisenberg inequality or uncertainty relation for the operators $A_p$ and $A_{\Da}$, when  computed with the coherent states $\eta_{p,q}$,  is given by
\begin{equation}
\label{deltadelta}
\Delta A_p \,\Delta A_{\Da} \geqslant\frac{1}{2}\left|\bra{\eta_{p,q}}[A_p, A_{\Da}]\ket{\eta_{p,q}} \right|\, .
\end{equation}
Before calculating directly the product of dispersions on the left-hand side of \eqref{deltadelta}, let us study in more details the right-hand-side of this inequality as a function of phase space variables and underlying constants.  
We {recall that the factor $\mathtt{c}=\frac{c_{2}(\eta,\gamma)}{\kappa c_{1}(\eta,\gamma)}$ can be put equal to $1$ following the remarks after \eqref{kappachoice}. Thus, we just have to study the relative smallness of  the mean value of the multiplication operator given by \eqref{amcr2}.
With the  choice \eqref{eq:familydelta}, for $\widetilde{\eta}^{(\epsilon,\delta)}(\alpha)=\eta^{(\epsilon,\delta)}(-\alpha)$ the r.h.s. of \eqref{deltadelta} reads 
\begin{align}
\label{braketF}
\nonumber
\frac{1}{2}\left|\bra{\eta_{p,q}} [A_p, A_{\Da}]\ket{\eta_{p,q}}\right| 
& =\frac{1}{2}\mathtt{c}\left|1-2\pi\,
\left( \left(\widetilde{\eta}^{(\epsilon,\delta)}\right)^2\ast E_{\eta^{(\epsilon,\delta)};\frac{\pi}{2}}\right) 
(q)
\right|\, .
\end{align}
Let us now compute the l.h.s. of \eqref{deltadelta} by using
\begin{equation}
\label{dispersionsquare}
\left[\Delta A_f(p,q)\right]^{2}=
\left<\eta_{p,q}^{(\epsilon,\delta)}\right|A_{f}^2\left|\eta_{p,q}^{(\epsilon,\delta)}\right>-\left(\left<\eta_{p,q}^{(\epsilon,\delta)}\right|A_{f}\left|\eta_{p,q}^{(\epsilon,\delta)}\right> \right)^{2}\, .
\end{equation}
With the particular case considered  in Section \ref{sec:spectral}, Eq. \eqref{eq:spct}, the value of $\left(\Delta A_{\Da} \right)^{2}$ is given by
\begin{equation}
\label{deviation1}
\begin{aligned}
\left[\Delta A_{\Da}(q) \right]^{2}=\left[\left(\widetilde{\eta}^{(\epsilon,\delta)}\right)^2\ast\left( E_{\eta^{(\epsilon,\delta)};\frac{\pi}{2}}\ast\Da\right)^2\right](q)
-
\left(\left[\left(\widetilde{\eta}^{(\epsilon,\delta)}\right)^2\ast\left( E_{\eta^{(\epsilon,\delta)};\frac{\pi}{2}}\ast\Da\right)\right](q) \right)^{2}\, ,
\end{aligned}
\end{equation}
\begin{figure}[H]
    \centering
	        \includegraphics[width=0.7\textwidth]{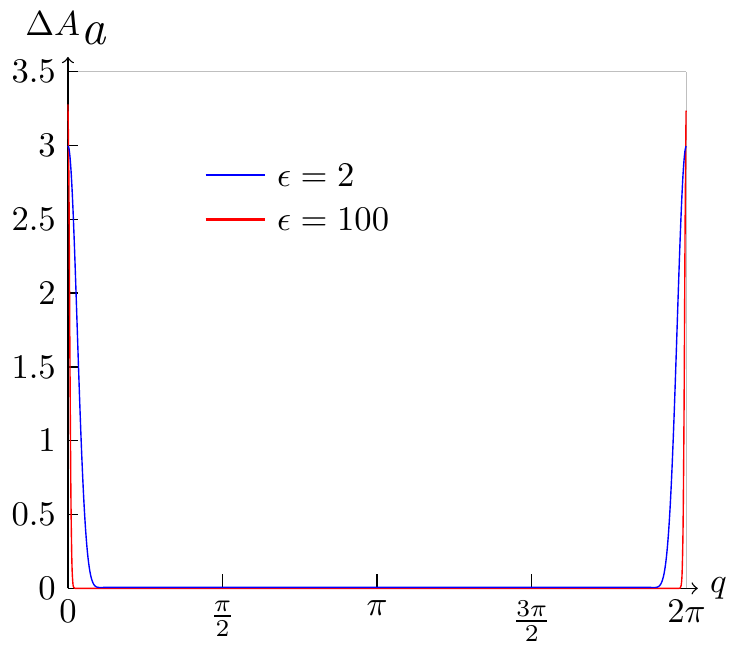}
	        \caption{Behavior of the dispersion $\Delta A_{\Da}$ for $\delta=0.3$ and $\epsilon\geq2$, when they are defined with respect to the coherent state $\ket{\eta_{p,q}^{(\epsilon,\delta)}}$.}
	        \label{fig:qvar}
\end{figure}
In  figure \ref{fig:qvar}, $\Delta A_{\Da}$ is almost constant with a very small value (good localisation). The exception corresponds to the region near $q=0$ and $q=2\pi$, where we see that it is not possible to have a good localisation since the value of $\Delta A_{\Da}$ is too large.

The value of $\left[\Delta A_p(p)\right]^{2}$ is given by
\begin{equation}
\label{deviationp}
\begin{aligned}
\left[\Delta A_p(p)\right]^{2}=
\mathtt{c}^2
\frac{\epsilon}{\delta^{2}}\frac{1}{e_{2\epsilon}}\int_{-1}^{1}\ud x \,
\omega_{2\epsilon}(x)
\left(\frac{2\left(1+3x^2\right)}{(1-x^2)^{3}}-\epsilon\frac{4x^2}{(1-x^2)^{4}}\right)\\
+
\left(c_{-2}\left(\eta^{(\epsilon,\delta)},\frac{\pi}{2}\right)-c_{-1}\left(\eta^{(\epsilon,\delta)},\frac{\pi}{2}\right)^{2} \right) (\kappa \mathtt{c})^2\,p^2\, .
\end{aligned}
\end{equation}
\begin{figure}[H]
    \centering
	        \includegraphics[width=0.7\textwidth]{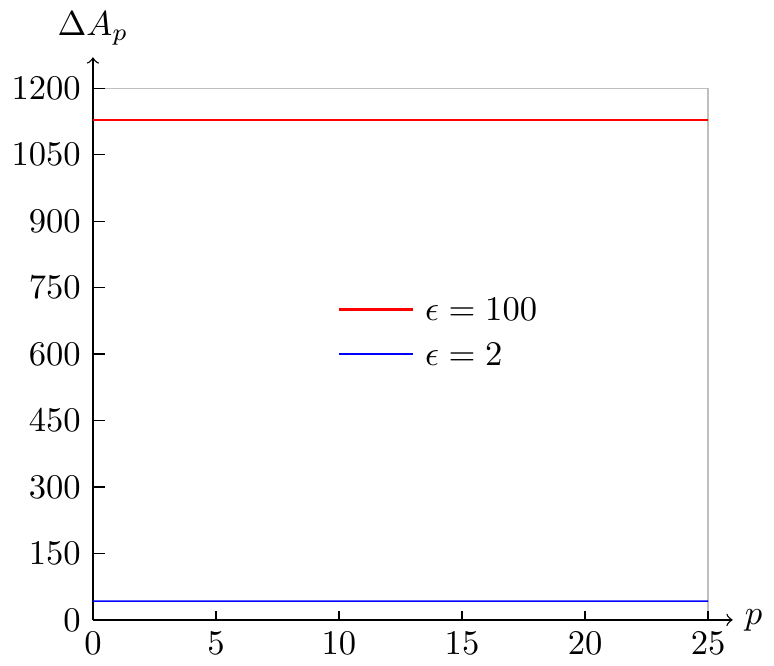}
	        \caption{Behavior of the dispersion $\Delta A_{p}$ for $\delta=0.3$ and $\epsilon\geq2$, when they are defined with respect to the coherent state $\ket{\eta_{p,q}^{(\epsilon,\delta)}}$.}
	        \label{fig:pvar}
\end{figure}
As is seen in Figure \ref{fig:pvar}, with a suitable choice of $\epsilon$ and $\delta$, the dependence on $p$ in the expression \eqref{deviationp} can be neglected, therefore $\Delta A_p$ does not depend on $p$ or $q$. 

In figure \ref{fig:uncertaintyR} we show that the 
uncertainty product $\Delta A_{\Da}\Delta A_p$ for $\delta=0.3$ gets close to the minimum uncertainty with $\epsilon=2$ and $\epsilon=100$. For large values of $\epsilon$ the state $\ket{\eta_{p,q}^{(\epsilon,\delta)}}$ saturates the uncertainty relation \eqref{deltadelta}. 
\begin{figure}[H]	
    \centering
    \begin{subfigure}[b]{0.7\textwidth}
        \includegraphics[width=\textwidth]{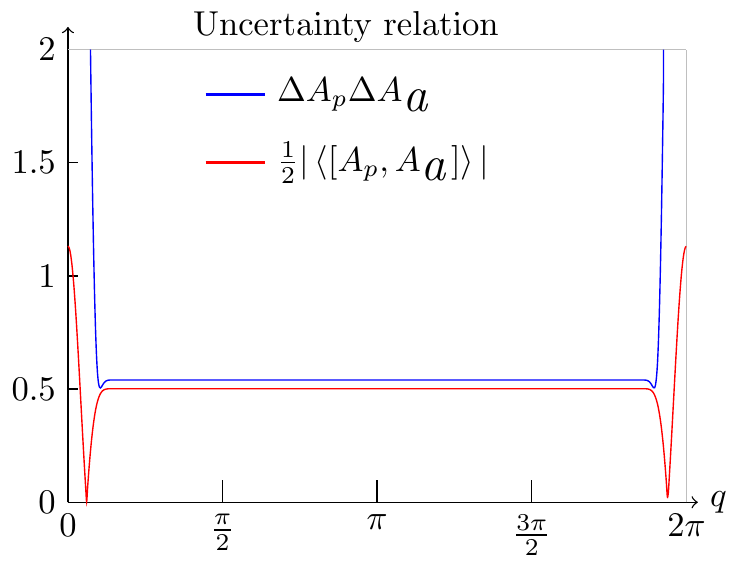}
        \caption{$\epsilon=2$}
        \label{fig:unce2}
    \end{subfigure}
    \begin{subfigure}[b]{0.7\textwidth}
        \includegraphics[width=\textwidth]{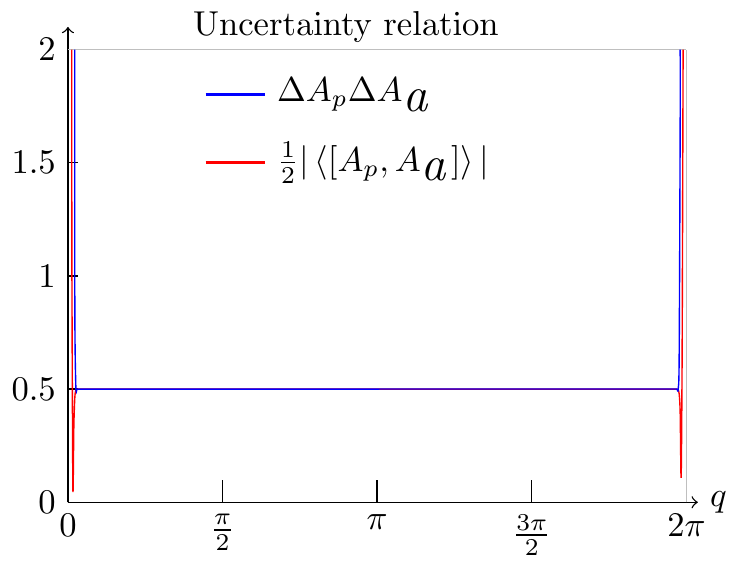}
        \caption{$\epsilon=100$}
        \label{fig:unce100}
    \end{subfigure}
    \caption{Behavior of the uncertainty relation for $\delta=0.3$ and $\epsilon\geq2$, respect to the state $\ket{\eta_{p,q}^{(\epsilon,\delta)}}$.}\label{fig:uncertaintyR}
\end{figure}
\subsubsection{With Fourier exponentials as eigenfunctions of $A_{p}$}
It is interesting to compare the above inequalities computed with coherent states with those calculated from the eigenstates $\varphi_{m}(\alpha)=\frac{e^{\text{i}m\alpha}}{\sqrt{2\pi}}$ of $A_{p}$, $m\in\mathbb{Z}$,
\begin{equation}
\left(A_{p}^n\varphi_{m}\right)(\alpha)= \mathtt{c}^n m^n \varphi(\alpha)\, ,
\end{equation}
and obviously $\left(\Delta A_{p} \right)^2 =0$. 
 The action of $A_{\Da}$ on $\varphi_{m}$ is
\begin{equation}
\left(A_{\Da}^n\varphi_{m}\right)(\alpha)= \left[ \left(E_{\eta;\gamma}\ast\Da \right) (\alpha)\right] ^n \varphi(\alpha)\, .
\end{equation}
The expectation value $\left<\varphi_{m}\right|A_{\Da}^n\left|\varphi_{m}\right>$ will be
\begin{equation}
\left<\varphi_{m}\right|A_{\Da}^n\left|\varphi_{m}\right>=
\int_{0}^{2\pi}\text{d}\alpha\,\overline{\varphi_{m}(\alpha)}\left(A_{\Da}^n\varphi_{m}\right)(\alpha)
=\frac{1}{2\pi}\int_{0}^{2\pi}\text{d}\alpha\,\left[ \left(E_{\eta;\gamma}\ast\Da \right) (\alpha)\right] ^n\, .
\end{equation}
Considering $\delta=0.3$ one can calculate $\Delta A_{\Da}$ for different values of $\epsilon$:
\begin{center}
 \begin{tabular}{||c c c c||} 
 \hline
 $\epsilon$ & $\left<A_{\Da}\right>$ & $\left<A_{\Da}^2\right>$ & $\Delta A_{\Da}$ \\ [0.5ex] 
 \hline\hline
 $1$ & $3.1404$ & $12.8007$ & $1.7143$ \\ 
 \hline
 $2$ & $3.1404$ & $12.8662$ & $1.7333$ \\
 \hline
 $10$ & $3.1403$ & $13.0016$ & $1.7720$ \\
 \hline
 $100$ & $3.1402$ & $13.1040$ & $1.8009$ \\ [1ex] 
 \hline
\end{tabular}
\end{center}
The above values are close to the dispersion for \eqref{angleop1}, where $\Delta\hat{\theta}=\frac{\pi}{\sqrt{3}}=1.8138$. Of course there is no contradiction with the inequality \eqref{deltadeltaphi}, since the average value of the commutator in the normalised Fourier exponentials is also vanishing. 

\section{Fourier analysis and other probabilistic aspects}
In this section, we  compare our results with the ones in \cite{argaho12}, where the construction of the angle operator was achieved through a quantisation based on different coherent states for the motion on the circle.
With the notations of  Appendix \ref{CirCSqangle} where we give a review of the work \cite{argaho12}, these normalised states are defined by the expansion (see \eqref{discrprobC}) 
\begin{equation}
\label{ccs1}
| p, q \rangle =  \frac{1}{\sqrt{{\mathcal N}^{\sigma} (p)}} \sum_{n \in \Z} \sqrt{w^{\sigma}_n(p)}  \,\E^{- \ii n q} | e_n\rangle\, ,\quad p \in \R\, , 0\leq q < 2\pi\, 
\end{equation}
where the $|e_n\rg$'s form an orthonormal basis of a Hilbert space,   $\mathcal{H}$, and $0< \mathcal{N}^{\sigma}(p) \deq \sum_{n\in \Z}  w^{\sigma}_n (p)$ is the normalisation factor. The construction of states \eqref{ccs1} rests upon the  probability distribution $w^{\sigma} (p)$ on the range of the  variable $p$. It is a non-negative, \underline{even}, well-localized and normalized integrable function which is {subject to the conditions listed in \ref{condcs}. 

States \eqref{ccs1} resolve the identity with respect to the measure $\mathcal{N}^{\sigma} (p)\,  \frac{\ud p\, \ud q}{2\pi}$.  By virtue of the CS quantisation scheme,  the quantum operator (acting on ${\mathcal H}$) associated with the functions $f(p,q)$ on the cylinder is obtained through
\begin{equation}
A_f := \int_{ { \R}\times [0,2\pi] }f(p,q) | p, q\rangle \langle p, q| \,\mathcal{N}^{\sigma} (p)\,  \frac{\ud p\, \ud q}{2\pi} = \sum_{n,n'} \left(A_f\right)_{nn'} \, |e_n\rg \lg e_{n'}|\, .
\label{ccs2}
\end{equation}
In particular, we obtain  the  angle operator corresponding to the $2\pi$-periodic angle function $\Da (q)$ previously defined as the  periodic extension of $\Da (q) = q$  for $0\leq  q< 2\pi$:
\begin{equation}
\label{ccs3}
A_{\Da } = \pi I + \ii\, \sum_{n\neq n^{\prime}}\frac{w^{\sigma}_{n,n^{\prime}}}{n-n^{\prime}}\,| e_n\rangle \langle e_{n^{\prime}} |\, .
\end{equation}
This operator is bounded self-adjoint. For $f(p) = p$, the assumptions on $w^{\sigma}$ give
\begin{equation}  
A_p  = \int_{\mathbb{S}^1\times\R} \frac{\ud p\, \ud\alpha}{2\pi}  \mathcal{N}^{\sigma}(p)\, p\, | p,\alpha \rangle \langle p, \alpha |  = \sum_{n \in \Z}
n\, | e_n\rangle \langle e_n| = N\, .
\end{equation}
This is nothing  but the number or angular momentum operator (in units of $\hbar$), which reads  $A_p = -\ii \partial/\partial \theta$ in angular position representation, i.e. when $\mathcal{H}$ is chosen as $L^2(\mathbb{S}^1,\ud\alpha)$ with orthonormal basis $|e_n\rg \equiv \E^{\ii n\alpha}/\sqrt{2\pi}$ (Fourier series).

Let us compare states \eqref{ccs1} with the CS introduced  in  the present paper and given by \eqref{qpalfa}. Their  Fourier expansion reads
\label{fourier}
\begin{equation}
\label{fourierdeceta}
\eta_{p,q}(q)=\sum_{n\in\mathbb{Z}}c_{n}\left(\eta_{p,q}\right)e^{\textrm{i}n\alpha}\, ,
\end{equation}
with Fourier coefficients given by
\begin{equation}
\label{fouriercoeffeta}
c_{n}\left(\eta_{p,q}\right)=\frac{1}{2\pi}\int_{\mathbb{S}^{1}}\textrm{d}\alpha\, e^{-\textrm{i}n\alpha}\eta_{p,q}(\alpha)=\frac{1}{2\pi}\int_{\mathbb{S}^{1}}\textrm{d}\alpha\, e^{-\textrm{i}n\alpha}e^{\textrm{i}\left[\mathcal{R}(q-\alpha)(\kappa p+\lambda)\right]_{1}}\eta(\alpha-q)\, .
\end{equation}
The change of variable $\alpha\rightarrow \alpha-q$ in \eqref{fouriercoeffeta} gives
\begin{equation}
\label{fouriercoeffeta2}
c_{n}\left(\eta_{p,q}\right)=e^{\textrm{-i}nq}\frac{1}{2\pi}\int_{\mathbb{S}^{1}}\textrm{d}\alpha\, e^{-\textrm{i}n\alpha+\textrm{i}\kappa p\cos(\alpha-\gamma)+\textrm{i}\lambda\cos(\alpha-\zeta)}\eta(\alpha)\, .
\end{equation}
Comparing the Fourier coefficients \eqref{ccs1} with the ones given in \eqref{fouriercoeffeta2}  yields the relation
\begin{equation}
\label{integrablefunction}
\sqrt{w_{n}^{\sigma}(p)}=
\frac{\sqrt{\mathcal{N}^{\sigma}(p)}}{2\pi}
\int_{\mathbb{S}^{1}}\textrm{d}\alpha\, e^{-\textrm{i}n\alpha+\textrm{i}\kappa p\cos(\alpha-\gamma)+\textrm{i}\lambda\cos(\alpha-\zeta)}\eta(\alpha)\, .
\end{equation}
Besides the positiveness condition imposed to the r.h.s integral above,  we immediately notice that  \eqref{integrablefunction} fails to fullfill  the condition
\eqref{w0wn}, i.e. $\sqrt{w_{n}^{\sigma}(p)} = \sqrt{w_{0}^{\sigma}(p-n)}$. Hence,  it is not possible to make a direct connection between \cite{argaho12} and our present work.


\section{Conclusions}

In this work we address the open problem concerning the quantisation of the angle operator and its related localisation following the method of covariant integral quantisation.  
The cylinder $\R\times\mathbb{S}^{1}$ depicts the classical phase space of the motion of a particle on a circle, which is mathematically realized as the cotangent bundle $\text{E}(2)/H_0$, where $H_0$ is the stabilizer under the coadjoint action of $\text{E}(2)$. 
The coherent states for $\text{E}(2)$  are constructed from the induced representations of the semi-direct product structure $\text{E}(2)=\R^{2}\rtimes\text{SO}(2)$. For various functions on phase space, the corresponding operators are provided. 
In the particular case of periodic functions $f(q)$ of the coordinate $q$, the operators $A_f$  are multiplication operators whose spectra are given by periodic functions. 

The angle function $\Da(\alpha)$, defined by $\Da(\alpha) = \alpha $ for $\alpha\in[0,2\pi)$, is  mapped to  a self-adjoint multiplication  angle operator $A_{\Da}$ with continuous spectrum. For a particular family of coherent states,  it is shown that the spectrum is $[\pi -m(\epsilon,\delta), \pi + m(\epsilon,\delta)]$, where $m(\epsilon,\delta)\to \pi$ as $\delta \to 0$ or $\epsilon\to\infty$. In other words, we are restricted to the motion on $[\pi -m(\epsilon,\delta), \pi + m(\epsilon,\delta)]$, the whole circle is recovered only when $\delta \to 0$ or $\epsilon\to\infty$. Therefore systems like the classical pendulum or the torsion spring (where the angular motion is restricted) can be quantised without major issues.
Is also shown that the semiclassical portrait $\check{q}$ of $A_{\Da}$ can be made arbitrarily close to the values of the angle function $\Da(\alpha)$.

We found a (non-canonical) commutation rule between the angle operator and the momentum operator, as well as an expression for the uncertainty relation between them. The uncertainty relation with eigenstates of the momentum gives similar results to what we expect at working with \eqref{angleop1}.
\label{conclu}

\appendix

\section{Quantum angle for cylindric phase space}
\label{CirCSqangle}
 
  In this appendix, we give a summary of the work \cite{argaho12} where other  coherent states for the motion on the circle and their associated integral quantisation quantisation were presented.
 
 \subsection{Other coherent states}
 
We start with  the cylindric phase space $\R \times  {[0,2\pi]} = \{  (p,q), \, | \, \, p \in \R\, , \, 0 \leq q < 2\pi   \}$, equipped with the measure $ \frac{1}{2\pi}  \, \ud p\, \ud q $.  We introduce a probability distribution on the range of the  variable $p$. It is a non-negative, \underline{even}, well localized and normalized integrable  function 
\begin{equation}
\R\ni p \mapsto w^{\sigma} (p)\,  ,\quad w^{\sigma} (p)= w^{\sigma} (-p)\,, \quad \int_{-\infty}^{+\infty} \ud p\, 
w^{\sigma} (p)=1\, , 
\end{equation}
where $\sigma >0$ is a  width parameter. 
This function must obey the following conditions:
\begin{proper}
\label{condcs}
\begin{itemize}
 \item[(i)] $0< \mathcal{N}^{\sigma}(p) \deq \sum_{n\in \Z}  w^{\sigma}_n (p) < \infty$ for all $p\in \R$, where
 \begin{equation}
\label{w0wn}
w^{\sigma}_n (p)\deq w^{\sigma} (p-n)\,. 
\end{equation} 
 \item[(ii)] the Poisson summation formula is applicable to $\mathcal{N}^{\sigma}$:
 \begin{equation*}
 \mathcal{N}^{\sigma}(p) = \sum_{n\in \Z}  w^{\sigma}_n (p) = 
 \sqrt{2\pi} \sum_{n\in \Z}  \E^{-2\pi \ii np}\widehat{w^{\sigma}}(2\pi n)\, ,  
 \end{equation*}
 where $\widehat{w^{\sigma}}$ is the Fourier transform of $w^{\sigma}$,
 \item[(iii)]  its limit  at $\sigma \to 0$, in a distributional  sense, is the Dirac distribution:
 \begin{equation*}
 w^{\sigma} (p) \underset{\sigma \to 0}{\to} \delta(p)\,,
 \end{equation*} 
 \item[(iv)]  the limit  at $\sigma \to \infty$ of its Fourier transform is proportional to the characteristic function of the singleton $\{0\}$:
   \begin{equation*}
 \widehat{w^{\sigma}} (k) \underset{\sigma \to \infty}{\to} \frac{1}{\sqrt{2\pi}}\, \delta_{k0}\,,
 \end{equation*} 
 \item[(v)] considering the \emph{overlap matrix} of the two distributions $p \mapsto w^{\sigma}_n(p)$, $p \mapsto w^{\sigma}_{n'}(p)$ with matrix elements,
  \begin{equation*}
 w^{\sigma}_{n,n'} = \int_{-\infty}^{+\infty} \ud p\, \sqrt{w^{\sigma}_n(p)\, w^{\sigma}_{n'}(p)} \leq 1\, , 
 \end{equation*}
 we impose the two conditions 
  \begin{equation*}
 \label{cond1}
  w^{\sigma}_{n,n'} \to 0 \quad \mbox{as} \quad  n-n' \to \infty  \quad \mbox{at fixed}\ \sigma\, , \tag{a}
  \end{equation*} 
 \begin{equation*}    
  \label{cond2}  
  \exists\, n_M\geq 1 \quad \mbox{such that} \quad   w^{\sigma}_{n,n'} \underset{\sigma \to \infty}{\to} 1 \quad   
  \mbox{provided} \ \vert n-n'\vert \leq n_M \, .\tag{b}
 \end{equation*} 
\end{itemize} 
\end{proper}
Properties (ii) and (iv) entail that $ \mathcal{N}^{\sigma}(p)  \underset{\sigma \to \infty}{\to} 1$. Also note the properties of the overlap matrix elements $w^{\sigma}_{n,n'}$ due to the properties of $w^{\sigma}$:
\begin{equation*}
w^{\sigma}_{n,n'}= w^{\sigma}_{n',n}=w^{\sigma}_{0,n'-n}= w^{\sigma}_{-n,-n'}\,, \quad w^{\sigma}_{n,n} = 1\, \quad \forall\, n, n' \in \Z\, .
\end{equation*}
The most immediate  choice for $w^{\sigma} (p)$  is  Gaussian, i.e. $w^{\sigma} (p)= \sqrt{\frac{1}{2\pi\sigma^2}}\,\E^{-\frac{1}{2\sigma^2} p^2 }$ (for which the $n_M$ in \eqref{cond2} is $\infty$) 
 Let us now introduce the weighted Fourier exponentials:
\begin{equation*}
\phi_n (p,\alpha) = \sqrt{w^{\sigma}_n(p)}  \,\E^{  \ii n\alpha}\, , \quad n\in \Z\,.
\end{equation*}
These functions form the countable  orthonormal system in $L^2(\mathbb{S}^1\times \R,\ud p\,\ud q/2\pi)$ needed to construct coherent states in agreement with a general  procedure explained, for instance,  in \cite{gazbook09}. Let $\mathcal{H}$ be a Hilbert space with orthonormal basis $\{| e_n\rangle\, | \, n\in \Z\}$, e.g. $ \mathcal{H} = L^2(\mathbb{S}^1, \ud \alpha\}$ with $e_n(\alpha) = \frac{1}{\sqrt{2\pi}}e^{\ii n \alpha}$. The  correspondent family of coherent states on the circle reads as:
\begin{equation}
| p, q \rangle =  \frac{1}{\sqrt{{\mathcal N}^{\sigma} (p)}} \sum_{n \in \Z} \sqrt{w^{\sigma}_n(p)}  \,\E^{- \ii n q} | e_n\rangle\, .
\end{equation}
 These states  are normalized and resolve the unity. They overlap as:
\begin{equation*}
\lg p,q|p^{\prime},q^{\prime}\rg =  \frac{1}{\sqrt{{\mathcal N}^{\sigma} (p)\, {\mathcal N}^{\sigma} (p\textit{})}}\sum _{n \in \Z} \sqrt{w^{\sigma}_n(p)\, w^{\sigma}_n(p^{\prime})}  \,\E^{- \ii n(q-q^{\prime})}\, . 
\end{equation*}
 The function $w^{\sigma} (p)$ gives rise to a double probabilistic interpretation \cite{gazbook09}:
\begin{itemize}
 \item For all $p$ viewed as a shape parameter, there is the discrete distribution, 
 \begin{equation}
 \label{discrprobC}
 \Z \ni n \mapsto \vert \lg e_n|p,\alpha\rg\vert^2=  \frac{ w^{\sigma}_n (p)}{{\mathcal{N}}^{\sigma} (p)} \, .
 \end{equation}
 This probability, of genuine quantum nature,   concerns experiments performed on the system described by the Hilbert space $\mathcal{H}$ within some experimental protocol, 
 in order to measure the  spectral values of a  self-adjoint operator  acting in $\mathcal{H}$ and having the discrete spectral resolution $ \sum_n a_n |e_n\rg\lg e_n|$. For $a_n=n$ this operator is  the number or quantum angular momentum operator. 
 
 \item For each $n$, there is the continuous distribution on the cylinder $\R \times \mathbb{S}^1$ (reps. on $\R$) equipped with its  measure $\ud p\, \ud q/2\pi$ (resp. $\ud p$), 
 \begin{equation}
 \label{contprobC}
 (p, q) \mapsto \vert \phi_n (p, q) \vert^2 = w^{\sigma}_n (p)\quad (\mbox{resp.}\ \quad \R \ni p \mapsto w^{\sigma}_n (p))\,.
 \end{equation}
 This probability, of classical nature and uniform on the circle,  determines the CS quantisation of functions of $p$.
\end{itemize}
\subsection{CS quantisation}  
By virtue of the CS quantisation scheme,  the quantum operator (acting on ${\mathcal
H}$) associated with functions $f(p,q)$ on the cylinder is obtained through
\begin{equation}
A_f := \int_{ { \R}\times [0,2\pi] }f(p,q) | p, q\rangle \langle p, q| \,\mathcal{N}^{\sigma} (p)\,  \frac{\ud p\, \ud q}{2\pi} = \sum_{n,n'} \left(A_f\right)_{nn'} \, |e_n\rg \lg e_{n'}|\, ,
\label{oper15}
\end{equation}
where
\begin{equation}
\label{matelAf15}
\left(A_f\right)_{nn'} = \int_{-\infty}^{+\infty}\ud p\, \sqrt{w^{\sigma}_n(p)\, w^{\sigma}_{n'}(p)}\,\frac{1}{2 \pi}\int_0^{2 \pi}\ud q\, \E^{-\ii (n-n')q}\, f(p,q)\, .
\end{equation}
The lower symbol of $f$ is  given by:
\begin{equation}
\label{lowsymbvp}
\check{f} (p,q) = \lg p,q | A_f | p,q \rg 
= \int_{-\infty}^{+\infty}\ud p' \int_{0}^{2\pi}
\frac{\ud q^{\prime}}{2\pi}\,\mathcal{N}^{\sigma}(p^{\prime}) \, f(p^{\prime},q^{\prime})\, \vert\lg p,q| p^{\prime},q^{\prime}\rg\vert^2 \, .
\end{equation}
If $f$ is depends on $p$  only, $f(p,q) \equiv v(p)$, then $A_f$ is diagonal with matrix elements that are $w^{\sigma}$ transforms of $v(p)$:
\begin{equation*}
\left(A_{v(p)}\right)_{nn^{\prime}} = \delta_{nn^{\prime}}\int_{-\infty}^{+\infty}\ud p\, w^{\sigma}_n(p)\, v(p)= \delta_{nn^{\prime}} \lg v\rg_{w^{\sigma}_n}\, ,
\end{equation*}
where $ \lg \cdot \rg_{w^{\sigma}_n}$ designates the mean value w.r.t. the distribution $p\mapsto w^{\sigma}_n(p)$. 
For the most basic case, $v(p) = p$,    our assumptions  on $w^{\sigma}$ give
\begin{equation}  A_p  = \int_{\mathbb{S}^1\times\R} \frac{\ud p\, \ud\alpha}{2\pi}  \mathcal{N}^{\sigma}(p)\, p\, | p,\alpha \rangle \langle p, \alpha |  = \sum_{n \in \Z}
n\, | e_n\rangle \langle e_n| = N\, .
\end{equation}
This is nothing  but the number or angular momentum operator (in unit $\hbar = 1$), which reads  $A_p = -\ii \partial/\partial \alpha$ in angular position representation, i.e. when $\mathcal{H}$ is chosen as $L^2(\mathbb{S}^1,\ud\alpha)$. 

Let us define the unitary representation $\theta \mapsto U_{\mathbb{S}^1}(\theta)$ of  $\mathbb{S}^1$ on  $\mathcal{H}$ as the diagonal operator  $U_{\mathbb{S}^1}(\theta)|e_n\rg = \E^{\ii n  \theta}|e_n\rg$, i.e. $U_{\mathbb{S}^1}(\theta) = \E^{\ii \theta N}$. We easily infer from  
the  straightforward covariance property of the coherent states :
\begin{equation*}
U_{\mathbb{S}^1}(\theta)|p,q\rg = |p, q - \theta\rg\,  ,
\end{equation*}
the  rotational covariance of $A_f$ itself,
\begin{equation*}
U_{\mathbb{S}^1}(\theta)A_f U_{\mathbb{S}^1 }(-\theta)= A_{T^{-1}(\theta)f} \, , 
\end{equation*}
where $T^{-1}(\theta)f(\alpha)\okr f(\alpha + \theta)$.
 
If $f$ depends on $q$ only,   $f(p,q)= u(q)$, we have
\begin{align}  A_{u(q)} = & \int_{\R\times [0,2\pi]} \frac{\ud p\, \ud q}{2\pi}\mathcal{N}^{\sigma}(p) v(q) \,  | p,q \rangle \langle p, q |  \\
&= \sum_{n,n' \in \Z}
w^{\sigma}_{n,n'} \,c_{n-n'}(v)| e_n\rangle \langle e_{n'} |\,  ,
\label{f(beta)15}
\end{align}
where 
$c_{n}(v)$ is the $n$th Fourier coefficient of $v$. 
In particular, we have  the  angle operator corresponding to the $2\pi$-periodic angle function $\Da (q)$ previously defined as the  periodic extension of $\Da (q) = q$  for $0\leq q < 2\pi$
\begin{equation}
\label{opangle15}
A_{\Da } = \pi I + \ii\, \sum_{n\neq n'}\frac{w^{\sigma}_{n,n'}}{n-n'}\,| e_n\rangle \langle e_{n'} |\, ,
\end{equation}
This operator is bounded self-adjoint. Its covariance property is
\begin{equation}
\label{covqancirc}
U_{\mathbb{S}^1}(\theta)A_{\Da } U_{\mathbb{S}^1}(-\theta) = A_{\Da } + (\theta \, \mathrm{mod}(2\pi))I\, .
\end{equation}
Note the operator corresponding to the elementary Fourier exponential,
\begin{equation}
\label{opfourier15}
A_{\E^{\pm \ii q}} = \, w^{\sigma}_{1,0}\sum_{n}
| e_{n \pm 1}\rangle \langle e_n |\, , \quad A_{\E^{\pm \ii q}}^{\dagger}= A_{\E^{\mp \ii q}}\, . 
\end{equation}
We remark that $A_{\E^{\pm \ii q}}\, A_{\E^{\pm \ii q}}^{\dagger}=  A_{\E^{\pm \ii q}}^{\dagger}\,A_{\E^{\pm \ii q}}= (w^{\sigma}_{1,0})^2 1_d$. Therefore this operator fails to be unitary. It is ``asymptotically'' unitary at large $\sigma$ since the factor $(w^{\sigma}_{1,0})^2$ can  be made  arbitrarily close to $1$ at large $\sigma$ as a consequence of Requirement (\ref{cond2}). In the Fourier series realization of ${\mathcal H}$, for which the kets $ | e_n \rangle$
are the Fourier exponentials $\E^{\ii \,n\alpha}/\sqrt{2\pi}$,  the operators $A_{\E^{\pm \ii q}} $ are multiplication operators by  $\E^{\pm \ii \alpha}$ up to the factor $w^{\sigma}_{1,0}$. 
Finally,  the  commutator of angular momentum and angle operators is given by the expansion
\begin{equation}
\label{ccrcir15}
\lbrack A_p, A_{\Da } \rbrack = \ii \sum_{n \neq n'}
w^{\sigma}_{n,n'}\, | e_n\rangle \langle e_{n' } |\, .
\end{equation} 
One observes that the overlap matrix completely encodes  this basic commutator.
 Because of  the required properties of the distribution $w^{\sigma}$ the departure of the r.h.s. of \eqref{ccrcir15} from the canonical r.h.s. $-\ii I$ can be bypassed by examining the behavior of the lower symbols at large $\sigma$. For an original function depending on $q$ only we have the Fourier series
\begin{equation}
\label{lowsymbfphi}
\check{f}(p_0, q_0)=  \langle p_0, q_0 | A_{f} |  p_0, q_0 \rangle    =  c_0(f) + \sum_{m \neq 0} d_m^{\sigma}(p_0)\,  w^{\sigma}_{0,m}\, c_m(f)\, \E^{ \ii m q_0}\, , 
\end{equation}
with 
\begin{equation}
\label{coeffdfphi}
d_m^{\sigma}(p)= \frac{1}{\mathcal{N}^{\sigma}(p)}\sum_{r=-\infty}^{+\infty} \sqrt{w^{\sigma}_{r}(p)w^{\sigma}_{m+r}(p)} \leq 1\, ,
\end{equation}
the last inequality resulting from Condition (i) and Cauchy-Schwarz inequality. If we further  impose the condition that $d_m^{\sigma}(p) \to 1$ uniformly as $\sigma \to{+} \infty$, then  the lower symbol $\check{u}(p_0, q_0)$ tends to the Fourier series of the original function $u(q)$. A similar result is obtained for the lower symbol of the commutator (\ref{ccrcir15}):
\begin{equation}
\label{lowsymcomangle}
\langle p_0, q_0 | \lbrack A_p, A_{\Da} \rbrack |  p_0, q_0 \rangle    =  
\ii \sum_{m \neq 0} d_m^{\sigma}(p_0)\,  w^{\sigma}_{0,m}\, \E^{ \ii m q_0}\, . 
\end{equation}
 Therefore, with the condition that $d_m^{\sigma}(p) \to 1$ uniformly as $\sigma \to \infty$, we obtain at this limit  the result similar to \eqref{comrel1},
\begin{equation}
\label{lowsymcomlim}
\langle p_0, q_0 | \lbrack A_p, A_{\Da} \rbrack |  p_0, q_0 \rangle \underset{\sigma \to \infty}{\to}   -\ii +  \ii\sum_{m } \delta(q_0 - 2 \pi m)\, . 
\end{equation}
So we asymptotically (almost) recover the classical canonical commutation rule except for the singularity at the origin $\mathrm{mod}\, 2\pi$, a logical consequence of the discontinuities of the saw function $\Da (q)$ at these points.


\end{document}